\documentclass[letterpaper,10pt,conference]{ieeeconf}
\IEEEoverridecommandlockouts
\usepackage{amsthm}
\usepackage{amsmath}
\usepackage{amssymb}
\usepackage{mathtools}

\let\labelindent\relax
\usepackage{enumitem}
\usepackage{graphicx}
\usepackage{cite}
\usepackage[hidelinks]{hyperref}
\usepackage{cleveref}
 
\newtheorem{assumption}{\normalfont \textbf{Assumption}}
\newtheorem{lemma}{\normalfont \textbf{Lemma}}
\newtheorem{theorem}{\normalfont \textbf{Theorem}}
\newtheorem{proposition}{\normalfont \textbf{Proposition}}
\newtheorem{definition}{\normalfont \textbf{Definition}}
\newtheorem{remark}{\normalfont \textbf{Remark}}

\crefname{assumption}{Assumption}{Assumptions}
\Crefname{assumption}{Assumption}{Assumptions}
\usepackage{xcolor}

\newcommand{\R}{\mathbb{R}}
\newcommand{\sgn}{\mathrm{sgn}}
\newcommand{\diag}{\mathrm{diag}}
\title{A Networked SIS Epidemic--Opinion Model with Higher-Order Interactions}
\author{Saba Samadi, Jos\'{e} I. Caiza, Sebin Gracy, and Philip E. Par\'{e}* \thanks{*Saba Samadi, Jos\'{e} I. Caiza and Philip E.~Par\'e are with the Elmore Family School of Electrical and Computer Engineering at Purdue University. Sebin Gracy is with the Department of Electrical Engineering and Computer Science at South Dakota School of Mines and Technology. Emails:~ssamadi@purdue.edu, jcaiza@purdue.edu, philpare@purdue.edu, 
sebin.gracy@sdsmt.edu. 
This work was supported in part 
   by the National Science Foundation, grant
   NSF-ECCS \#2238388.}}

\begin{document}
\maketitle

\begin{abstract}
This paper studies a susceptible--infected--susceptible (SIS) epidemic model coupled with opinion dynamics over a network of communities with higher-order interactions. Unlike standard networked SIS models, which account only for pairwise transmission, the proposed model incorporates group-level infection mechanisms and coupling between epidemic prevalence and community opinions. We secure conditions for the (local) stability (resp. instability) of a particular healthy equilibrium, derive a sufficient condition for global exponential eradication of an infection state, and identify conditions under which higher-order interactions induce bistability in the reduced dynamics on a positively invariant synchronous set.
These results characterize how higher-order interactions alter the dynamics of opinion-dependent epidemic systems. Numerical simulations illustrate the predicted eradication, bistable, and endemic regimes.
\end{abstract}

\section{Introduction}\label{sec:introduction}

Network epidemic models play an important role in understanding and controlling the spread of infectious diseases over interconnected populations \cite{nowzari2016analysis, pare2020modeling, mei2017dynamics}. The susceptible--infected--susceptible (SIS) model is a widely studied framework for diseases where recovery does not confer lasting immunity \cite{lajmanovich1976deterministic, van2008virus, khanafer2016stability}. Classical networked SIS epidemic models describe contagion through pairwise contacts encoded by a graph. However, real-world contagion events often involve group-level exposures that simple graphs cannot faithfully capture. This limitation has motivated the use of higher-order network structures such as hypergraphs and simplicial complexes \cite{battiston2021physics, bick2023higher}.
In this direction, SIS spreading over hypergraphs has been studied in 
\cite{bodo2016sis,iacopini2019simplicial}, where simplicial interactions 
are shown to induce discontinuous phase transitions and bistability. 
Other works have established conditions for disease-free, endemic, and 
bistable regimes in multigroup SIS models, addressed nonlinear spreading 
and diffusion on higher-order and directed hypergraphs, and studied 
competitive epidemics with higher-order interactions~\cite{cisneros2021multigroup, stonge2021universal, cui2023general, gracy2025}.

Epidemic evolution is influenced not only by contact structure but also by how
communities perceive the severity of an outbreak. Perceived severity can alter
protective behavior and social contact, motivating coupled
behavior--epidemic models \cite{ye2021game}. 
Following~\cite{she2022networked}, we let opinion modulate transmission and an
\emph{effective recovery rate}, while the local infection fraction non-trivially influences the opinion evolution. Here, the effective recovery rate represents mechanisms
that reduce the infectious fraction, including treatment, adherence, testing,
isolation, and community intervention; it is not interpreted as a purely
biological recovery rate. The signed opinion-dynamics component is related to antagonistic-consensus models such as \cite{altafini2012consensus}. Related directions include discrete-time SIS--opinion models with stubborn agents \cite{xu2025coupled}, game-theoretic epidemic--behavior models \cite{ye2021game,catalano2025game}, and multilayer coevolutionary models with competing opinions \cite{peng2021multilayer}. However, these works are restricted to pairwise contagion.

Recent work has begun to combine behavioral or opinion feedback with
higher-order network structure in related settings. For example,
\cite{kan2026modeling} incorporates simplicial reinforcement in an awareness
layer, while \cite{wen2026opinion} studies opinion modulation and higher-order
contacts in a dual-virus model. The present work differs in its focus on a
single-virus networked SIS model with state-dependent signed opinion dynamics,
a global eradication condition uniform over admissible opinion states, and conditions under which higher-order transmission induces bistability on a
synchronous invariant set. In this
sense, the higher-order-interaction
susceptible--infected--opinion--susceptible (HOI-SIOS) model combines and extends the opinion-dependent SIS
framework in \cite{she2022networked} and the higher-order SIS framework in
\cite{cisneros2021multigroup}.
The main contributions of this paper are as follows:
\begin{enumerate}[label=(\roman*)]
    \item We formulate a coupled SIS--opinion model with pairwise and quadratic
    higher-order transmission, opinion-dependent transmission and removal
    rates, and state-dependent signed opinion dynamics.

    \item We establish positive invariance, characterize local stability and
    instability of the maximally skeptical healthy equilibrium, and derive a
    sufficient condition for global exponential eradication.

    \item Under a homogeneous setting, we reduce the synchronous dynamics to two
dimensions and derive bistability conditions. We isolate the role of
higher-order transmission and give a sufficient condition for instability of
healthy equilibria in the heterogeneous model.
\end{enumerate}

\textit{Notation:}
We let [n] denote the set of the first n natural numbers, i.e., \([n]\coloneqq\{1,\ldots,n\}\). The symbols \(\mathbf{0}_n\),
\(\mathbf{1}_n\), and \(I_n\) denote the \(n\)-dimensional zero vector, the
all-ones vector, and the \(n\times n\) identity matrix, respectively. For
vectors or matrices of the same size, inequalities \(u\le v\) are
understood componentwise. In particular, \(x\ge \mathbf{0}_n\) and
\(x\gg \mathbf{0}_n\) denote componentwise non-negativity and strict positivity.
For a square matrix \(M\), \(\rho(M)\) denotes its spectral radius,
\(\mathbf{s}(M)\) its spectral abscissa, \(\det(M)\) its determinant, and
\(\operatorname{tr}(M)\) its trace. The operator \(\diag(\cdot)\) constructs a
diagonal matrix from a vector; when applied to a square matrix, it extracts its
diagonal. A square matrix is nonnegative if all its entries are nonnegative,
and it is Metzler if all its off-diagonal entries are nonnegative. For
\(x\in[0,1]^n\) and \(o\in[-0.5,0.5]^n\), we write
\(X\coloneqq\diag(x)\), \(O\coloneqq\diag(o)\), and
\(o'\coloneqq o+0.5\,\mathbf{1}_n\), so that
\(O'=\diag(o')=O+0.5I_n\).
The modified sign function is defined as
\[
\sgn_m(z)\coloneqq
\begin{cases}
1, & z\ge 0,\\
-1, & z<0,
\end{cases}
\]
and the gauge matrix is
\[
\Phi(o)\coloneqq \diag(\sgn_m(o_1),\ldots,\sgn_m(o_n)).
\]
The convention \(\sgn_m(0)=1\) ensures that \(\Phi(o)\) is defined for all
\(o\in[-0.5,0.5]^n\) and has diagonal entries in \(\{-1,1\}\), so that
\(\Phi(o)^2=I_n\), even when some \(o_i=0\).
For a function $f(\mu,\nu)$, we write $\partial_\mu f$ and $\partial_\nu f$
for the partial derivatives with respect to the first and second arguments,
respectively.
\section{Model Formulation}\label{sec:model}
We consider a population partitioned into $n$ interacting communities. For
each $i\in[n]$, the state $x_i(t)\in[0,1]$ is the infected fraction of community
$i$ at time instant t, and $o_i\in[-0.5,0.5]$ represents its perceived severity of the epidemic;
larger values indicate greater concern. The epidemic and opinion states are
coupled in both directions. The local infection fraction influences perceived severity,
while perceived severity changes protective behavior and the effective
transmission and recovery rates.

Pairwise transmission is described by a nonnegative adjacency matrix
$A=[a_{ij}]$, where $a_{ij}x_j$ represents the exposure of community $i$ to
the infected fraction in community $j$. We also include quadratic higher-order
transmission. For every community $i$, let
$B_i=[b_{ijk}]\in\mathbb R_{\ge0}^{n\times n}$ denote its higher-order
interaction matrix. The term $b_{ijk}x_jx_k$ represents joint exposure of the community $i$ to infected sources in communities $j$ and $k$; the source
communities may coincide. Under a mean-field factorization, $x_jx_k$ is the
corresponding joint-infection probability. The resulting higher-order
transmission term is
\[
\mathcal A_2(x)
\coloneqq
\bigl[x^\top B_1x,\ldots,x^\top B_nx\bigr]^\top.
\]
When $i$, $j$, and $k$ are distinct and the coefficients are induced by a
simplicial complex, $b_{ijk}>0$ can represent a filled triangle
$\{i,j,k\}$. Repeated indices instead represent within-community or
repeated-source quadratic exposure. We focus on quadratic interactions because they are the lowest-order
nonlinear terms that can produce the bistability studied below. Higher-order products can be introduced
analogously, but their equilibrium analysis leads to higher-degree equations.

\subsection{Opinion-Dependent Rates}

Define the diagonal parameter matrices:
\begin{align*}
\Gamma_{\min} &\coloneqq \gamma^{\min} I_n, \quad
\Gamma \coloneqq \diag(\gamma_1, \ldots, \gamma_n),\\
B_1^{\min} &\coloneqq \beta_1^{\min} I_n, \quad
\beta_1 \coloneqq \diag(\beta_{11}, \ldots, \beta_{n1}), \\
B_2^{\min} &\coloneqq \beta_2^{\min} I_n, \quad
\beta_2 \coloneqq \diag(\beta_{12}, \ldots, \beta_{n2}).
\end{align*}
The opinion-dependent effective recovery and transmission-rate matrices are:
\begin{align}
\mathcal{D}(o) &\coloneqq \Gamma_{\min} + (\Gamma - \Gamma_{\min}) O', \label{eq:D_def} \\
\mathcal{B}_q(o) &\coloneqq \beta_q - (\beta_q - B_q^{\min}) O', \quad q \in \{1, 2\}. \label{eq:B_def}
\end{align}

As $o_i'$ increases from zero to one, the $i$th effective recovery rate
increases from $\gamma^{\min}$ to $\gamma_i$, while the corresponding
transmission rates decrease from $\beta_{iq}$ to $\beta_q^{\min}$. We interpret
$\mathcal D(o)$ as an effective recovery rate rather than a purely biological
recovery rate: it aggregates treatment seeking, adherence, testing, isolation,
and community intervention. The affine dependence interpolates between the skeptical and concerned
regimes.

\begin{assumption}\label{ass:parameters}
For all $i \in [n]$:
\begin{enumerate}[label=(\roman*)]
    \item $\gamma_i \ge \gamma^{\min} > 0$,
    \item $\beta_{i1} \ge \beta_1^{\min} \ge 0$,
    \item $\beta_{i2} \ge \beta_2^{\min} \ge 0$.
\end{enumerate}
\end{assumption}

\begin{assumption}
\label{ass:network}
The pairwise infection matrix satisfies $A\ge0$, the higher-order interaction matrices satisfy $B_i\ge0$ for all $i\in[n]$, and the unsigned opinion matrix
$\bar A_u\ge0$ is symmetric.
\end{assumption}

The analysis does not invoke irreducibility; disconnected pairwise and opinion
graphs are therefore allowed.

\subsection{Coupled HOI-SIOS Dynamics}

Under Assumptions~\ref{ass:parameters} and~\ref{ass:network}, the coupled epidemic--opinion dynamics are:
\begin{subequations}\label{eq:HOI-SIOS}
\begin{align}
\dot{x} &= -\mathcal{D}(o) x + (I_n - X) \Big( \mathcal{B}_1(o) A x + \mathcal{B}_2(o) \mathcal{A}_2(x) \Big), \label{eq:x_dynamics} \\
\dot{o} &= x - (I_n + \Phi(o) \bar{L}_u \Phi(o)) o - 0.5 \mathbf{1}_n, \label{eq:o_dynamics}
\end{align}
\end{subequations}
where $\bar{L}_u$ is the Laplacian of the unsigned opinion graph with adjacency matrix $\bar{A}_u$.
We define the signed opinion adjacency matrix
\(
\bar A(o)\coloneqq \Phi(o)\,\bar A_u\,\Phi(o),
\text{ and }
\bar a_{ij}(o)\coloneqq [\bar A(o)]_{ij}.
\)
The node-wise form is:
\begin{subequations}\label{eq:HOI-SIOS-node}
\begin{align}
\dot{x}_i &= -\big[\gamma^{\min} + (\gamma_i - \gamma^{\min}) o_i'\big] x_i \nonumber \\
&\quad + (1 - x_i) \sum_{j=1}^n \big[\beta_{i1} - (\beta_{i1} - \beta_1^{\min}) o_i'\big] a_{ij} x_j \nonumber \\
&\quad + (1 - x_i) \sum_{j,k=1}^n \big[\beta_{i2} - (\beta_{i2} - \beta_2^{\min}) o_i'\big] b_{ijk} x_j x_k, \label{eq:x_node} \\
\dot{o}_i &= (x_i - o_i') + \sum_{j=1}^n |\bar{a}_{ij}(o)| \Big( \sgn(\bar{a}_{ij}(o)) o_j - o_i \Big). \label{eq:o_node}
\end{align}
\end{subequations}

The unsigned matrix $\bar A_u$ specifies which communities exchange opinions
and the strength of those exchanges, independently of whether the interaction
is cooperative or antagonistic. The gauge matrix determines the sign of each interaction from the signs of
the current opinions. In particular, whenever
$[\bar A_u]_{ij}>0$,
\[
\sgn\bigl(\bar a_{ij}(o)\bigr)
=\sgn_m(o_i)\sgn_m(o_j).
\]
Hence, communities with the same opinion sign interact cooperatively, whereas
communities with opposite signs interact antagonistically. For every fixed sign pattern, this construction produces a
structurally balanced signed graph. The edge signs switch when an opinion
crosses zero, while the unsigned interaction strengths remain fixed.

The local term $x_i-o_i'$ in~\eqref{eq:o_node} causes perceived severity to
track the local infection fraction: concern increases when $x_i>o_i'$ and
decreases when $x_i<o_i'$. The signed-network term then couples this local
response to neighboring communities.

\begin{definition}[State domain]\label{statedomain}
Define the open domain $\mathbf{D} \coloneqq (0,1)^n \times (-0.5, 0.5)^n$ and its closure $\overline{\mathbf{D}} = [0,1]^n \times [-0.5, 0.5]^n$.
\end{definition}

\begin{remark}[Solution convention]
Because $o\mapsto\Phi(o)$ is discontinuous on the switching surfaces
$\{o:o_i=0\}$, solutions are understood as absolutely continuous functions
satisfying~\eqref{eq:HOI-SIOS} almost everywhere. All subsequent statements
concerning solutions use this convention; uniqueness at the switching surfaces
is not required. Jacobian-based stability results are restricted to equilibria
satisfying $o_i^*\ne0$ for all $i$, where the vector field is smooth in a full
neighborhood. The comparison inequality in
Theorem~\ref{thm:global-stability-lyap}(see Section~\ref{sec:stability}) is likewise required only almost
everywhere.
\end{remark}

\begin{lemma}[Positive invariance]\label{lem:invariance}
Under Assumptions~\ref{ass:parameters} and~\ref{ass:network}, $\overline{\mathbf{D}}$ defined in Definition~\ref{statedomain} is positively invariant for the system~\eqref{eq:HOI-SIOS}.
\end{lemma}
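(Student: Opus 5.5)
We prove positive invariance of the box $\overline{\mathbf{D}}$ by checking that on each face the vector field points inward (or is tangent). The box is a product of intervals, so it suffices to verify four boundary conditions for every $i\in[n]$:
\[
\dot x_i\ge0 \text{ when } x_i=0,\quad \dot x_i\le0 \text{ when } x_i=1,\quad \dot o_i\ge0 \text{ when } o_i=-0.5,\quad \dot o_i\le0 \text{ when } o_i=0.5,
\]
for all other coordinates in $\overline{\mathbf D}$. These inequalities are then passed to trajectories by a standard barrier/comparison argument (e.g.\ the Nagumo-type tangency condition for closed convex sets). Since solutions are only absolutely continuous and the field is discontinuous across $\{o_i=0\}$, we do not rely on a smooth Nagumo theorem. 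Instead we argue directly: if $x_i(t)$ left $[0,1]$, consider the last time $t_0$ with $x_i(t_0)\in[0,1]$. The sign condition holds on a neighborhood of the face (by continuity in $x$, with uniform bounds on the other terms), which gives a contradiction via integration of $\dot x_i$.

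\emph{Epidemic faces.} First note that for $o\in[-0.5,0.5]^n$ we have $o'_i\in[0,1]$. By Assumption~\ref{ass:parameters}, the coefficient $\beta_{iq}-(\beta_{iq}-\beta_q^{\min})o_i'$ is a convex combination of $\beta_{iq}$ and $\beta_q^{\min}$, hence nonnegative. Likewise $\gamma^{\min}+(\gamma_i-\gamma^{\min})o_i'\ge\gamma^{\min}>0$.
\begin{itemize}
\item At $x_i=0$, the recovery term vanishes. The remaining terms in \eqref{eq:x_node} are products of nonnegative quantities by Assumption~\ref{ass:network} ($a_{ij},b_{ijk}\ge0$, $x\ge0$), so $\dot x_i\ge0$.
\item At $x_i=1$, the factor $(1-x_i)$ kills both infection terms, leaving $\dot x_i=-\mathcal D_{ii}(o)\le -\gamma^{\min}<0$.
\end{itemize}

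\emph{Opinion faces.} Rewrite \eqref{eq:o_node} using $|\bar a_{ij}|=[\bar A_u]_{ij}\ge0$ and $\sgn(\bar a_{ij})o_j\in[-0.5,0.5]$.
\begin{itemize}
\item At $o_i=-0.5$, we have $o_i'=0$, so the local term is $x_i\ge0$. Each coupling term is $[\bar A_u]_{ij}(\pm o_j+0.5)\ge0$. Hence $\dot o_i\ge0$.
\item At $o_i=0.5$, we have $o_i'=1$, so the local term is $x_i-1\le0$. Each coupling term is $[\bar A_u]_{ij}(\pm o_j-0.5)\le0$. Hence $\dot o_i\le0$.
\end{itemize}
These bounds hold regardless of the sign pattern of $\Phi(o)$, i.e.\ uniformly across switching surfaces, which is what makes the argument robust to the discontinuity.

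\emph{Main obstacle.} The delicate step is the passage from boundary sign conditions to invariance under the weak (Carathéodory) solution notion, because only nonstrict inequalities hold on the faces $x_i=0$, $o_i=-0.5$, and $o_i=0.5$. We would handle this by a perturbation argument. Consider the modified field with $\varepsilon$ added to the lower-face components and subtracted from the upper-face components, which yields strict inward pointing. For this field, any exit would require $\dot x_i$ (resp.\ $\dot o_i$) of the wrong sign on a set of positive measure near the face, contradicting the strict inequalities holding almost everywhere. We then let $\varepsilon\to0$ using the boundedness of the right-hand side on a neighborhood of $\overline{\mathbf D}$.

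Alternatively, we would use a direct estimate. For the $x$-face, on any interval where $x_i<0$ one has $\dot x_i\ge -c\,|x_i|$ by the affine structure, so Grönwall shows $x_i^-$ stays zero. The same linear-in-distance lower bound holds for the opinion faces, since the coupling term is affine in $o_i$ for a fixed sign pattern with uniformly bounded coefficients.
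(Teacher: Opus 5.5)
Your four face checks are exactly the paper's proof: nonnegative infection terms at $x_i=0$, the vanishing factor $1-x_i$ at $x_i=1$, and $\sgn(\bar a_{ij}(o))o_j-o_i\in[0,1]$ (resp.\ $[-1,0]$) on the opinion faces regardless of the sign pattern. The paper stops there and takes the passage to invariance for granted; of your extra justifications only the Gr\"onwall route is sound, because the $x$-field need not point inward near a face (just below $x_i=0$ with the other $x_j=0$ and $o=-0.5\mathbf 1_n$, $\dot x_i<0$ whenever $\beta_{i1}a_{ii}>\gamma^{\min}$), the $\varepsilon\to0$ perturbation limit needs uniqueness, which the paper explicitly does not assume, and the estimate must be run on the total distance $\sum_i[(-x_i)^+ +(x_i-1)^+ +(-0.5-o_i)^+ +(o_i-0.5)^+]$ rather than coordinate by coordinate, since a neighbor with $x_j<0$ or an opinion with $o_i'>1$ breaks $\dot x_i\ge -c|x_i|$.
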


\begin{proof}
We analyze each subsystem (i.e., the epidemic dynamics and the opinion dynamics) at the boundaries. 
From~\eqref{eq:x_node}, define $f_i(x,o)$ as the right-hand side.

Fix any $i\in[n]$ and suppose that $x_i=0$. Then the removal term vanishes, while all infection terms are nonnegative since $A\ge 0$, $B_i\ge 0$, $x_j\ge 0$, and $\mathcal{B}_q(o)\ge 0$ by Assumption~\ref{ass:parameters}. Hence, $\dot{x}_i=f_i(x,o)\ge 0$.
Suppose that $x_i=1$, which implies that $(1 - x_i) = 0$. The factor $(1 - x_i) = 0$ eliminates all the infection terms, leaving
\[
\dot{x}_i = -[\mathcal{D}(o)]_{ii} = -\big[\gamma^{\min} + (\gamma_i - \gamma^{\min}) o_i'\big] < 0,
\]
since $\gamma^{\min} > 0$ and $o_i' \ge 0$. Thus $\dot{x}_i < 0$. Therefore, it follows that $x(t) \in [0,1]^n$  for all $ t \in \mathbb{R}_{\geq 0}$.

Next, rewrite~\eqref{eq:o_dynamics} as:

\begin{equation}
    \dot{o}_i = x_i - o_i' + \sum_{j=1}^n |\bar{a}_{ij}(o)| \Big( \sgn(\bar{a}_{ij}(o)) o_j - o_i \Big). \label{eq:opinion}
\end{equation}
Suppose that $o_i = -0.5$, which implies that  $o_i' = 0$ and $x_i - o_i' = x_i \ge 0$.
For any $j$, since $o_j \in [-0.5, 0.5]$, it must be that:
\[
\sgn(\bar{a}_{ij}(o)) o_j - o_i = \sgn(\bar{a}_{ij}(o)) o_j - (-0.5) \in [0, 1].
\]
Since $|\bar{a}_{ij}(o)| \ge 0$, the summation term on the right-hand side of~\eqref{eq:opinion} is nonnegative. Thus $\dot{o}_i \ge 0$.
Suppose that  $o_i = 0.5$, which implies that
$o_i' = 1$ and, therefore, $x_i - o_i' = x_i - 1 \le 0$ (since $x_i \le 1$). For any $j$:
\[
\sgn(\bar{a}_{ij}(o)) o_j - o_i = \sgn(\bar{a}_{ij}(o)) o_j - 0.5 \in [-1, 0].
\]
Thus, the summation term on the right-hand side of~\eqref{eq:opinion} is nonpositive 
and, hence, $\dot{o}_i \le 0$.
Therefore, $o(t) \in [-0.5, 0.5]^n$, for all $ t \in \mathbb{R}_{\geq 0}$. 

Since $x(t) \in [0,1]^n$  and $o(t) \in [-0.5, 0.5]^n$ for all $ t \in \mathbb{R}_{\geq 0}$, $\overline{\mathbf{D}}$ is positively invariant.
\end{proof}

Before analyzing stability, we introduce terminology for the equilibrium
regimes considered below.

\begin{definition}[Equilibrium types]\label{def:equilibria}
An equilibrium $(x^*,o^*)\in\overline{\mathbf D}$ is \emph{healthy} if
$x^*=\mathbf0_n$ and \emph{endemic} if $x^*\gg\mathbf0_n$.
\end{definition}

Since the effective removal and transmission rates depend on the opinion
state, define the extremal pairwise reproduction numbers
\begin{equation}\label{eq:Rmin-Rmax}
\mathcal R_{\min}\coloneqq \rho(\Gamma^{-1}B_1^{\min}A),
\quad
\mathcal R_{\max}\coloneqq \rho(\Gamma_{\min}^{-1}\beta_1 A).
\end{equation}
The value $\mathcal R_{\max}$ corresponds to the maximally skeptical
opinion profile ($o'=\mathbf{0}_n$), for which effective removal is minimal
and pairwise transmission is maximal. Conversely, $\mathcal R_{\min}$
corresponds to the maximally concerned profile ($o'=\mathbf{1}_n$), for
which effective removal is maximal and pairwise transmission is minimal.
For every admissible opinion profile,
\(
\mathcal R_{\min}
\le
\rho\!\left(\mathcal D(o)^{-1}\mathcal B_1(o)A\right)
\le
\mathcal R_{\max}.
\)
These quantities are used below to distinguish the healthy and endemic
parameter regimes.

\section{Stability of a Healthy Equilibrium}\label{sec:stability}
Direct substitution shows that
$(x^h,o^h)\coloneqq(\mathbf0_n,-0.5\mathbf1_n)$ is a healthy equilibrium:
$\mathcal A_2(\mathbf0_n)=\mathbf0_n$ gives $\dot x=\mathbf0_n$, while
$\bar L_u\mathbf1_n=\mathbf0_n$ gives $\dot o=\mathbf0_n$. The equilibrium
$(x^h,o^h)$ pairs disease eradication with the maximally skeptical opinion
profile. Since healthy equilibria may be nonunique, the local analysis below
concerns $(x^h,o^h)$. The subsequent global result concerns only convergence
of the infection state, because the zero-infection opinion subsystem may admit multiple equilibria.

Define the combined infection term:
\[
y(x,o) \coloneqq \mathcal{B}_1(o) A x + \mathcal{B}_2(o) \mathcal{A}_2(x).
\]
For the HOI term, $[\mathcal{A}_2(x)]_i = x^\top B_i x$. 
Thus, the Jacobian of $\mathcal{A}_2$ with respect to $x$ is:
\[
J_{\mathcal{A}_2}(x) \coloneqq 
\begin{bmatrix}
\big((B_1 + B_1^\top) x\big)^\top \\
\vdots \\
\big((B_n + B_n^\top) x\big)^\top
\end{bmatrix} \in \R^{n \times n}.
\]
Equivalently, $[J_{\mathcal{A}_2}(x)]_{ij} = [(B_i + B_i^\top) x]_j$.


Since the map $o\mapsto \Phi(o)$ is piecewise constant,
system~\eqref{eq:HOI-SIOS} is piecewise smooth. Accordingly, the Jacobian
computed below is used only in regions where $\Phi(o)$ is constant, i.e., away
from the switching surfaces $\{o:o_i=0\}$. In particular, all Jacobian-based
stability arguments below are applied only at equilibria where $o_i^*\neq 0$
for every $i\in[n]$. On each such region, system~\eqref{eq:HOI-SIOS} is smooth,
and the Jacobian has the block structure:
\[
J(x,o) = 
\begin{bmatrix}
J_{xx}(x,o) & J_{xo}(x,o) \\[1mm]
I_n & J_{oo}(o)
\end{bmatrix},
\]
where:
\begin{align}
J_{xx}(x,o) &= -\mathcal{D}(o) - \diag(y(x,o)) \nonumber \\
&\quad + (I_n - X) \Big( \mathcal{B}_1(o) A + \mathcal{B}_2(o) J_{\mathcal{A}_2}(x) \Big), \nonumber\\
J_{xo}(x,o) &= -\diag\Big( (\Gamma - \Gamma_{\min}) x \nonumber \\
&\quad + (I_n - X) \big[ (\beta_1 - B_1^{\min}) A x \nonumber \\
&\quad + (\beta_2 - B_2^{\min}) \mathcal{A}_2(x) \big] \Big), \nonumber\\
J_{oo}(o) &= -(I_n + \Phi(o) \bar{L}_u \Phi(o)).\nonumber 
\end{align}

\subsection{Local Stability}
Since \(o^h=-0.5\mathbf{1}_n\), there exists a neighborhood of \((x^h,o^h)\)
on which \(o_i<0\) for all \(i\in[n]\). Hence, \(\Phi(o)=-I_n\) is constant on
that neighborhood, and system~\eqref{eq:HOI-SIOS} is locally smooth. Therefore,
Lyapunov’s indirect method applies at \((x^h,o^h)\), yielding the following proposition.
\begin{proposition}[Local exponential stability]\label{prop:local-stability}
Consider system~\eqref{eq:HOI-SIOS} under Assumptions~\ref{ass:parameters} and~\ref{ass:network}. If
\(
\mathcal R_{\max}<1,
\)
then the healthy equilibrium \((x^h,o^h)=(\mathbf{0}_n,-0.5\mathbf{1}_n)\) is locally
exponentially stable. If
\(
\mathcal R_{\max}>1,
\)
then \((x^h,o^h)\) is unstable.
\end{proposition}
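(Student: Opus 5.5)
The plan is to apply Lyapunov's indirect method at $(x^h,o^h)$, using that $\Phi(o)=-I_n$ on a neighborhood of $o^h$, so the vector field is smooth there and the block Jacobian above is valid. First we evaluate the blocks at $x=\mathbf 0_n$, $o=-0.5\mathbf 1_n$, i.e.\ $O'=0$:
\begin{itemize}
\item $\mathcal D(o^h)=\Gamma_{\min}$ and $\mathcal B_1(o^h)=\beta_1$.
\item $y(\mathbf 0_n,o^h)=\mathbf 0_n$ and $J_{\mathcal A_2}(\mathbf 0_n)=0$, since $J_{\mathcal A_2}$ is linear in $x$. Hence $J_{xx}=-\Gamma_{\min}+\beta_1A$.
\item Every term in the diagonal defining $J_{xo}$ carries a factor of $x$ or of $\mathcal A_2(x)$, so $J_{xo}(\mathbf 0_n,o^h)=0$.
\item $J_{oo}=-(I_n+\Phi\bar L_u\Phi)=-(I_n+\bar L_u)$, because $\Phi=-I_n$.
\end{itemize}
The Jacobian is therefore block lower triangular,
\[
J(x^h,o^h)=\begin{bmatrix}-\Gamma_{\min}+\beta_1A & 0\\ I_n & -(I_n+\bar L_u)\end{bmatrix},
\]
and its spectrum is the union of the spectra of the two diagonal blocks. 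The higher-order terms drop out entirely at this point, which is why only $\mathcal R_{\max}$ appears in the statement.

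For the opinion block, $\bar L_u$ is the Laplacian of the symmetric nonnegative matrix $\bar A_u$ (Assumption~\ref{ass:network}). It is therefore symmetric positive semidefinite, so every eigenvalue of $-(I_n+\bar L_u)$ is at most $-1$. This block never causes instability, and no irreducibility is needed.

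The main step is the epidemic block $M\coloneqq-\Gamma_{\min}+\beta_1A=\gamma^{\min}(-I_n+\Gamma_{\min}^{-1}\beta_1A)$. Because $\Gamma_{\min}=\gamma^{\min}I_n$ is a positive scalar multiple of the identity, $\mathbf s(M)=\gamma^{\min}\bigl(\rho(\Gamma_{\min}^{-1}\beta_1A)-1\bigr)$. This uses Perron--Frobenius for the nonnegative matrix $\Gamma_{\min}^{-1}\beta_1A$: its spectral radius is itself an eigenvalue, so the spectral abscissa of the nonnegative matrix equals its spectral radius, without needing irreducibility. If one prefers not to use the scalar structure, the standard M-matrix lemma gives the same sign equivalence: for $D$ positive diagonal and $N\ge0$, $\mathbf s(-D+N)<0\iff\rho(D^{-1}N)<1$, and likewise with $>$.

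It follows that $\mathcal R_{\max}<1$ gives $\mathbf s(J(x^h,o^h))<0$, so the equilibrium is locally exponentially stable. If $\mathcal R_{\max}>1$, then $M$ has a real eigenvalue $\gamma^{\min}(\mathcal R_{\max}-1)>0$, which is also an eigenvalue of $J$, and the instability half of the indirect method gives instability.

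One point needs care. Instability established in the full state space could in principle rely on perturbations that leave $\overline{\mathbf D}$. Here this is not a problem: the unstable eigenvector can be chosen with $x$-part a nonnegative Perron vector of $M$, so admissible perturbations with $x\ge\mathbf0_n$ still grow. If needed, we would argue this via the Metzler structure of the linearization on the invariant set (Lemma~\ref{lem:invariance}).
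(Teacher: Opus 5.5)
Your proposal is correct and follows essentially the same route as the paper. Both arguments use the same block lower-triangular Jacobian at $(x^h,o^h)$, show the opinion block $-(I_n+\bar L_u)$ is Hurwitz via the Laplacian, use the Metzler/spectral-radius equivalence for $-\Gamma_{\min}+\beta_1A$, and apply Lyapunov's indirect method in both directions. The two small refinements the paper does not spell out are your explicit identity $\mathbf s(-\Gamma_{\min}+\beta_1A)=\gamma^{\min}(\mathcal R_{\max}-1)$, which exploits $\Gamma_{\min}=\gamma^{\min}I_n$, and your remark that the unstable eigenvector points into $\overline{\mathbf D}$ (its $o$-component is also nonnegative, since $((\lambda+1)I_n+\bar L_u)^{-1}\ge 0$, and you should check that too, because $o^h$ sits at a corner).
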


\begin{proof}
We evaluate the Jacobian of~\eqref{eq:HOI-SIOS} at
\((x^h,o^h)=(\mathbf{0}_n,-0.5\mathbf{1}_n)\). Since \(o^h=-0.5\mathbf{1}_n\), we have $O'=0$.
Thus, from~\eqref{eq:D_def} and~\eqref{eq:B_def},
\[
\mathcal D(o^h)=\Gamma_{\min},\qquad
\mathcal B_1(o^h)=\beta_1,\qquad
\mathcal B_2(o^h)=\beta_2.
\]

Since $x^h=\mathbf{0}_n$, we have $y(\mathbf{0}_n,o^h)=\mathbf{0}_n$, $X=0$, and
$J_{\mathcal A_2}(\mathbf{0}_n)=0$, because $J_{\mathcal A_2}(x)$ is linear in $x$.
Hence,
\[
J_{xx}(\mathbf{0}_n,o^h)=-\Gamma_{\min}+\beta_1A.
\]
Also, since $x=\mathbf{0}_n$ and $\mathcal A_2(\mathbf{0}_n)=\mathbf{0}_n$,
\(
J_{xo}(\mathbf{0}_n,o^h)=0.
\)
Finally, since $\Phi(o^h)=-I_n$,
\[
J_{oo}(o^h)=-(I_n+(-I_n)\bar L_u(-I_n))=-(I_n+\bar L_u).
\]
Therefore, the Jacobian at $(x^h,o^h)$ is
\begin{equation*}\label{eq:jacobian:healthy}
J(\mathbf{0}_n,o^h)=
\begin{bmatrix}
-\Gamma_{\min}+\beta_1A & 0\\
I_n & -(I_n+\bar L_u)
\end{bmatrix}.
\end{equation*}
Since this matrix is lower block triangular, its eigenvalues are precisely the
eigenvalues of $-\Gamma_{\min}+\beta_1A$ and $-(I_n+\bar L_u)$.

We first consider the epidemic block
\(
-\Gamma_{\min}+\beta_1A.
\)
This is a Metzler matrix because $\beta_1A\ge 0$. For Metzler matrices,
Hurwitz stability of $-\Gamma_{\min}+\beta_1A$ is equivalent to
\(
\mathcal R_{\max}<1.
\)

Next, the opinion block $-(I_n+\bar L_u)$ has eigenvalues
$-(1+\lambda_i(\bar L_u))$, where $\lambda_i(\bar L_u)\ge 0$ because
$\bar L_u$ is a Laplacian matrix. Hence, all eigenvalues of $J_{oo}(o^h)$ have a
negative real part, so $J_{oo}(o^h)$ is Hurwitz.
Therefore, if $\mathcal R_{\max}<1$, then all eigenvalues of
$J(\mathbf{0}_n,o^h)$ have negative real parts, and
$(\mathbf{0}_n,-0.5\mathbf{1}_n)$ is locally exponentially stable by
\cite[Theorem~4.15 and Corollary~4.3]{khalil2002nonlinear}.

If $\mathcal R_{\max}>1$, then
$\mathbf s(-\Gamma_{\min}+\beta_1A)>0$ \cite{liu2019analysis}.
Because the full Jacobian is lower block triangular, it consequently has an
eigenvalue with positive real part. Lyapunov's indirect method therefore
implies that $(\mathbf0_n,-0.5\mathbf1_n)$ is unstable
\cite[Theorem~4.7(ii)]{khalil2002nonlinear}.
\end{proof}

\subsection{Global Disease Eradication}

We now derive a stronger condition that guarantees global eradication of the epidemic
state. Since the asymptotic opinion profile need not be unique, the result is
stated only in terms of the convergence of $x(t)$ to $\mathbf{0}_n$.

\begin{lemma}[Linear bound for $\mathcal{A}_2(x)$]\label{lem:Cbound} Define the matrix $C=[c_{ik}]\in\mathbb{R}^{n\times n}$ 
where $c_{ik}\coloneqq \sum_{j=1}^n b_{ijk}$. 
Then, for all $x\in[0,1]^n$, $\mathcal{A}_2(x)\le Cx,$ 
componentwise. 
\end{lemma}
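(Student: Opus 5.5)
The plan is to prove the bound componentwise, working entry by entry and using only the box constraint $x\in[0,1]^n$ together with the nonnegativity of the coefficients $b_{ijk}$ from Assumption~\ref{ass:network}. Fix $i\in[n]$. By the definition of $\mathcal A_2$ in Section~\ref{sec:model},
\[
[\mathcal A_2(x)]_i = x^\top B_i x = \sum_{j=1}^n\sum_{k=1}^n b_{ijk}\,x_jx_k .
\]
The argument will bound each summand separately and then regroup the double sum so that the coefficient of each $x_k$ becomes $c_{ik}$.

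The key step is the elementary inequality $x_jx_k\le x_k$ for all $j,k$, which holds because $0\le x_j\le 1$ and $x_k\ge 0$. Multiplying by $b_{ijk}\ge 0$ preserves the direction and gives $b_{ijk}x_jx_k\le b_{ijk}x_k$. Summing over $j$ and $k$, and interchanging the finite sums, yields
\[
[\mathcal A_2(x)]_i \le \sum_{k=1}^n\Bigl(\sum_{j=1}^n b_{ijk}\Bigr)x_k=\sum_{k=1}^n c_{ik}x_k=[Cx]_i .
\]
Since $i$ was arbitrary, this gives $\mathcal A_2(x)\le Cx$ componentwise.

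There is no genuine obstacle here. The only point requiring care is the choice of which factor to drop. Bounding $x_j\le 1$ and keeping $x_k$ matches the stated definition $c_{ik}=\sum_j b_{ijk}$, which sums over the first source index. Dropping $x_k$ instead would produce the transposed index sum and a different matrix. I would also note that the argument handles repeated indices ($j=k$) without modification, since $x_k^2\le x_k$ on $[0,1]$. Nonnegativity of the $b_{ijk}$ is essential, because otherwise the inequality could reverse for some terms.
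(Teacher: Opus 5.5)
Your proof is correct and is the same as the paper's argument. You bound $x_jx_k\le x_k$ using $x_j\le 1$, multiply by $b_{ijk}\ge 0$, and regroup the double sum into $[Cx]_i$; you also state explicitly the nonnegativity of $b_{ijk}$ and $x_k$ (from Assumption~\ref{ass:network} and $x\in[0,1]^n$), which the paper uses only implicitly.
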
 
\begin{proof} For each $i\in[n]$ and $x\in[0,1]^n$, \footnotesize \[ [\mathcal{A}_2(x)]_i = \!\!\sum_{j,k=1}^n\!b_{ijk} x_j x_k \le\! \!\sum_{j,k=1}^n\!b_{ijk} x_k = \!\sum_{k=1}^n\!\bigg(\!\sum_{j=1}^n b_{ijk}\!\bigg) x_k = [Cx]_i, \] \normalsize where we used $x_j \le 1$ for all $j$. 
\end{proof}

\begin{theorem}[Global disease eradication]\label{thm:global-stability-lyap}
Consider system~\eqref{eq:HOI-SIOS} under Assumptions~\ref{ass:parameters}
and~\ref{ass:network}. Define
\(
M\coloneqq \Gamma_{\min}^{-1}(\beta_1A+\beta_2C).
\)
If $\rho(M)<1$, then every solution of~\eqref{eq:HOI-SIOS} with initial
condition $(x(0),o(0))\in\overline{\mathbf D}$ satisfies
\(
x(t)\to \mathbf{0}_n \text{ exponentially as }t\to\infty.
\)
\end{theorem}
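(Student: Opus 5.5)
Our plan is a comparison argument followed by a linear Lyapunov function built from the Perron--Frobenius left eigenvector of a Metzler matrix. The first step is to bound the epidemic dynamics, uniformly over admissible opinions, by a linear system. By Lemma~\ref{lem:invariance}, for every solution we have $x(t)\in[0,1]^n$ and $o(t)\in[-0.5,0.5]^n$, so $o'(t)\in[0,1]^n$. From~\eqref{eq:D_def}, $\mathcal D(o)\ge\Gamma_{\min}$ componentwise. From~\eqref{eq:B_def} and Assumption~\ref{ass:parameters}, $\mathbf 0\le\mathcal B_q(o)\le\beta_q$. Moreover, $(I_n-X)$ is diagonal with entries in $[0,1]$, $Ax\ge0$, and Lemma~\ref{lem:Cbound} gives $0\le\mathcal A_2(x)\le Cx$. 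Combining these, for almost every $t$,
\[
\dot x\le -\Gamma_{\min}x+\beta_1Ax+\beta_2Cx=\Gamma_{\min}(M-I_n)x\eqqcolon Qx .
\]
This inequality needs only almost-everywhere validity, which matches the solution convention.

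The second step turns $\rho(M)<1$ into a Hurwitz property. The matrix $Q=-\Gamma_{\min}+\beta_1A+\beta_2C$ is Metzler. Since $\Gamma_{\min}=\gamma^{\min}I_n$ is a positive scalar multiple of the identity, $\rho(M)<1$ is equivalent to $\rho(\beta_1A+\beta_2C)<\gamma^{\min}$, that is, $\mathbf s(Q)<0$. Because $\beta_1A+\beta_2C\ge0$ may be reducible, I would avoid relying on a strictly positive Perron vector. Instead I would use the standard Metzler characterization: $Q$ is Hurwitz if and only if there exists $v\gg\mathbf 0_n$ with $v^\top Q\ll\mathbf 0_n$. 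Concretely, take $v^\top\coloneqq\mathbf 1_n^\top(-Q)^{-1}$. The inverse $(-Q)^{-1}$ is nonnegative with positive diagonal, so $v\gg0$, and $v^\top Q=-\mathbf 1_n^\top$. Alternatively, one could perturb to an irreducible matrix and take its left Perron vector.

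The third step is the Lyapunov estimate. Define $V(x)\coloneqq v^\top x$, which is nonnegative on $[0,1]^n$. Since $v\gg0$ and the comparison inequality holds componentwise,
\[
\dot V=v^\top\dot x\le v^\top Qx=-\mathbf 1_n^\top x\le -\frac{1}{\max_i v_i}\,V \quad\text{a.e.}
\]
Because $V$ is absolutely continuous along solutions, Grönwall's inequality gives $V(t)\le e^{-ct}V(0)$ with $c=1/\max_iv_i$. Then $\|x(t)\|_1\le V(t)/\min_iv_i$, which yields exponential convergence of $x(t)$ to $\mathbf 0_n$ at a rate independent of the opinion trajectory.

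The main obstacle I expect is the passage from $\rho(M)<1$ to a strictly positive vector $v$ without assuming irreducibility. The construction via $(-Q)^{-1}$ handles this. A secondary point is justifying the sign bounds, in particular $\mathcal B_q(o)\le\beta_q$ and $\mathcal B_q(o)\ge0$, which requires $o'\in[0,1]^n$ and hence Lemma~\ref{lem:invariance}. One must also note that multiplying the inequality by $(I_n-X)$ preserves it because that matrix is diagonal with entries in $[0,1]$ and the infection terms are nonnegative.
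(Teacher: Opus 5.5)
Your proposal is correct and matches the paper's proof: you use the same componentwise bound $\dot x\le\Gamma_{\min}(M-I_n)x$. Your vector $v=(-Q^\top)^{-1}\mathbf 1_n$ is the paper's $w=(I_n-M^\top)^{-1}\mathbf 1_n$ scaled by $1/\gamma^{\min}$, so your $V=v^\top x$ is exactly the paper's $V=\frac{1}{\gamma^{\min}}w^\top x$, with the same decay rate $1/\max_i v_i=\gamma^{\min}/w_{\max}$. You are somewhat more careful in checking $v\gg\mathbf 0_n$ without irreducibility and in the almost-everywhere Gr\"onwall step.
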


\begin{proof}
For every $(x,o)\in\overline{\mathbf D}$, we have $o_i'\in[0,1]$, and hence
\[
\mathcal D(o)\ge \Gamma_{\min},\quad
\mathcal B_q(o)\le \beta_q,\quad q\in\{1,2\}.
\]
Also, since $x\in[0,1]^n$, the diagonal matrix $(I_n-X)$ satisfies
\(
0\le I_n-X\le I_n.
\)
By Lemma~\ref{lem:Cbound}, $\mathcal A_2(x)\le Cx$. Therefore, the epidemic
dynamics~\eqref{eq:x_dynamics} satisfy
\begin{align}
\dot x
&=
-\mathcal D(o)x+(I_n-X)\big(\mathcal B_1(o)Ax+\mathcal B_2(o)\mathcal A_2(x)\big)
\nonumber\\
&\le
-\Gamma_{\min}x+(\beta_1A+\beta_2C)x
=
\Gamma_{\min}(M-I_n)x.
\label{eq:bound:x:by:M}
\end{align}

Since $M\ge 0$ and $\rho(M)<1$, the matrix $I_n-M^\top$ is invertible and
inverse-positive. Define
\(
w\coloneqq (I_n-M^\top)^{-1}\mathbf{1}_n.
\)
Then $w\gg \mathbf{0}_n$ and
\(
w^\top(M-I_n)=-\mathbf{1}_n^\top.
\)

Now, define the Lyapunov function
\[
V(x)\coloneqq w^\top \Gamma_{\min}^{-1}x
=
\frac{1}{\gamma^{\min}}\,w^\top x.
\]
For $x\ge \mathbf{0}_n$, we have $V(x)\ge 0$, and $V(x)=0$ if and only if $x=\mathbf{0}_n$.
Along trajectories of~\eqref{eq:HOI-SIOS}, using~\eqref{eq:bound:x:by:M},
\begin{align*}
\dot V(x)
&=
w^\top \Gamma_{\min}^{-1}\dot x
\le
w^\top \Gamma_{\min}^{-1}\Gamma_{\min}(M-I_n)x \\
&=
w^\top(M-I_n)x
=
-\mathbf{1}_n^\top x.
\end{align*}
Let $w_{\max}\coloneqq \max_{i\in[n]} w_i$. Since
\[
V(x)=\frac{1}{\gamma^{\min}}w^\top x
\le
\frac{w_{\max}}{\gamma^{\min}}\mathbf{1}_n^\top x,
\]
it follows that
\(
\mathbf{1}_n^\top x \ge \frac{\gamma^{\min}}{w_{\max}}\,V(x).
\)
Hence,
\[
\dot V(x)\le -\frac{\gamma^{\min}}{w_{\max}}\,V(x).
\]
By the comparison lemma~\cite[Lemma~3.4]{khalil2002nonlinear},
\[
V(x(t))\le e^{-\frac{\gamma^{\min}}{w_{\max}}t}V(x(0)).
\]
Since $w\gg \mathbf{0}_n$, the function $V$ is equivalent to any norm on
$\mathbb R_{\ge 0}^n$, and therefore $x(t)\to \mathbf{0}_n$ exponentially.
\end{proof}

Theorem~\ref{thm:global-stability-lyap} guarantees global eradication of the
epidemic component. It does not, by itself, imply the convergence of the opinion
state to $-0.5\mathbf{1}_n$ or to a unique healthy equilibrium, since the
zero-infection opinion subsystem
\[
\dot o = -\big(I_n+\Phi(o)\bar L_u\Phi(o)\big)o-0.5\mathbf{1}_n
\]
may admit multiple equilibria. Establishing the convergence of $o(t)$ requires
additional assumptions on the opinion dynamics and is not pursued here.
Since $\beta_2C\ge0$, the monotonicity of the spectral radius implies
$\mathcal R_{\max}\le\rho(M)$. Hence, $\rho(M)<1$ implies
$\mathcal R_{\max}<1$, whereas the converse need not hold.

\section{Non-Healthy Equilibria and Bistability}\label{sec:endemic}
In this section, we analyze the model beyond the disease-free regime. Under a homogeneous setting, we show that HOIs can
induce healthy--endemic bistability even when $\mathcal R_{\max}<1$. When
$\mathcal R_{\min}>1$, we establish the existence of an endemic equilibrium on
the synchronous set and prove that every healthy equilibrium of the general
heterogeneous model away from the opinion switching surfaces is unstable.

\subsection{Bistability on a Synchronous Invariant Set}\label{subsec:BiSynch}
We now introduce a homogeneous setting in which the full dynamics admit a
positively invariant set of synchronous states.

\begin{definition}[Synchronous states]\label{def:synchronous-states}
A state $(x,o)\in\overline{\mathbf D}$ is called \emph{synchronous} if all
communities share a common infection level and a common shifted opinion, i.e.,
\[
x=\xi\mathbf{1}_n,\quad o=(\eta-0.5)\mathbf{1}_n
\]
for some $(\xi,\eta)\in[0,1]^2$. The set of all states is denoted by
\[
\mathcal S \coloneqq \{(x,o): x=\xi\mathbf{1}_n,\ o=(\eta-0.5)\mathbf{1}_n,\
(\xi,\eta)\in[0,1]^2\}.
\]
\end{definition}

\begin{definition}[Stability relative to an invariant set]
\label{def:relative-stability}
Let $\mathcal M$ be a positively invariant set for a dynamical system and let
$z^*\in\mathcal M$ be an equilibrium. The equilibrium $z^*$ is
\emph{stable relative to $\mathcal M$} if, for every $\varepsilon>0$, there
exists $\delta>0$ such that every solution initialized at
$z(0)\in\mathcal M$ with $\|z(0)-z^*\|<\delta$ satisfies
$\|z(t)-z^*\|<\varepsilon$ for all $t\ge0$. It is \emph{locally
asymptotically stable relative to $\mathcal M$} if it is stable relative to
$\mathcal M$ and solutions in a neighborhood of $z^*$ within $\mathcal M$
converge to $z^*$. It is \emph{locally exponentially stable relative to
$\mathcal M$} if there exist $c,\lambda,r>0$ such that
\[
\|z(t)-z^*\|\le c e^{-\lambda t}\|z(0)-z^*\|
\]
for all $t\ge0$ whenever $z(0)\in\mathcal M$ and
$\|z(0)-z^*\|<r$.
\end{definition}

\begin{assumption}\label{ass:homogeneous-benchmark}
The following conditions are satisfied:
\begin{enumerate}[label=(\roman*)]
    \item $\Gamma = \bar\gamma I_n$ for some $\bar\gamma\ge \gamma^{\min}$,
    and $\beta_1 = \bar\beta_1 I_n$, $\beta_2 = \bar\beta_2 I_n$ for some
    $\bar\beta_1\ge \beta_1^{\min}\ge 0$ and
    $\bar\beta_2\ge \beta_2^{\min}\ge 0$.
    \item $A\mathbf{1}_n = a\mathbf{1}_n$ for some $a\ge 0$.
    \item There exists $b\ge0$ such that
$\mathbf1_n^\top B_i\mathbf1_n=b, \; \forall i\in[n]$.
\end{enumerate}
\end{assumption}

Assumptions~\ref{ass:homogeneous-benchmark}(ii)--(iii) impose equal total
pairwise and quadratic higher-order input across communities. Consequently, a
synchronous state generates the same infection pressure in every community.
These conditions define a homogeneous setting under which $\mathcal S$ is
invariant; they do not imply that heterogeneous initial conditions generically
synchronize.

\begin{lemma}[Positive invariance of the synchronous set and reduced dynamics]
\label{lem:synchronous-manifold}
Under Assumption~\ref{ass:homogeneous-benchmark}, the set $\mathcal S$ is
positively invariant for system~\eqref{eq:HOI-SIOS}. Moreover, for every
initial condition $(x(0),o(0))\in\mathcal S$, every solution
$(x(t),o(t))$ of~\eqref{eq:HOI-SIOS} remains in $\mathcal S$ for all
$t\ge 0$, and the coordinates
$(\xi(t),\eta(t))$ defined by $x(t)=\xi(t)\mathbf{1}_n$ and
$o(t)=(\eta(t)-0.5)\mathbf{1}_n$ satisfy the planar system
\begin{equation}\label{eq:reduced-system}
\dot\xi=\xi\,\psi(\xi,\eta),\quad \dot\eta=\xi-\eta,
\end{equation}
where
\begin{equation}\label{eq:psi-def}
\psi(\xi,\eta)
= -d(\eta)
+ (1-\xi)\big(a\,\tilde\beta_1(\eta)
+ b\,\tilde\beta_2(\eta)\,\xi\big),
\end{equation}
with
\[
d(\eta)=\gamma^{\min}+(\bar\gamma-\gamma^{\min})\eta,
\]
\[
\tilde\beta_1(\eta)
=\bar\beta_1-(\bar\beta_1-\beta_1^{\min})\eta,
\quad
\tilde\beta_2(\eta)
=\bar\beta_2-(\bar\beta_2-\beta_2^{\min})\eta.
\]
\end{lemma}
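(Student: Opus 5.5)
The plan is to construct a candidate synchronous trajectory from the planar system~\eqref{eq:reduced-system}. I would then show, with a Gr\"onwall-type estimate, that every absolutely continuous solution of~\eqref{eq:HOI-SIOS} starting in $\mathcal S$ coincides with it. A comparison argument of this kind is needed because uniqueness of solutions of~\eqref{eq:HOI-SIOS} is not available at the switching surfaces. Simply checking that the vector field is tangent to $\mathcal S$ would therefore not prove the claim that \emph{every} solution stays in $\mathcal S$.

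\textbf{Step 1 (the reduced vector field).} I substitute $x=\xi\mathbf 1_n$ and $o=(\eta-0.5)\mathbf 1_n$, so that $O'=\eta I_n$, into~\eqref{eq:HOI-SIOS}.
\begin{itemize}[label={}]
\item By Assumption~\ref{ass:homogeneous-benchmark}(i), $\mathcal D(o)=d(\eta)I_n$ and $\mathcal B_q(o)=\tilde\beta_q(\eta)I_n$.
\item By (ii), $Ax=a\xi\mathbf 1_n$.
\item By (iii), $[\mathcal A_2(x)]_i=\xi^2\,\mathbf 1_n^\top B_i\mathbf 1_n=b\xi^2$ for every $i$.
\end{itemize}
Hence every row of~\eqref{eq:x_dynamics} equals $\xi\psi(\xi,\eta)$. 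For the opinion equation, all $o_i$ are equal, so $\Phi(o)=\pm I_n$ and $\Phi\bar L_u\Phi=\bar L_u$. Since $\bar L_u\mathbf 1_n=\mathbf 0_n$, every row of~\eqref{eq:o_dynamics} reduces to $\xi-(\eta-0.5)-0.5=\xi-\eta$. The planar field is polynomial, hence locally Lipschitz, so it has a unique solution $(\xi(t),\eta(t))$. This solution stays in $[0,1]^2$: at $\xi=0$ we have $\dot\xi=0$, at $\xi=1$ we have $\dot\xi=-d(\eta)<0$, and at $\eta\in\{0,1\}$ the sign of $\dot\eta=\xi-\eta$ points inward. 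The lift $(\xi\mathbf 1_n,(\eta-0.5)\mathbf 1_n)$ is therefore an absolutely continuous solution of~\eqref{eq:HOI-SIOS} lying in $\mathcal S$.

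\textbf{Step 2 (any solution coincides with the lift).} Let $(x,o)$ be any solution from the same initial point, and set $c(t)\coloneqq\eta(t)-0.5$, $e_x\coloneqq x-\xi\mathbf 1_n$, $e_o\coloneqq o-c\mathbf 1_n$. Consider $W=\tfrac12(\|e_x\|^2+\|e_o\|^2)$.
\begin{itemize}[label={}]
\item \emph{Epidemic part.} The right-hand side of~\eqref{eq:x_dynamics} does not involve $\Phi$ and is polynomial in $(x,o)$, hence Lipschitz on $\overline{\mathbf D}$ (which is invariant by Lemma~\ref{lem:invariance}). This gives $e_x^\top\dot e_x\le L(\|e_x\|^2+\|e_o\|^2)$.
\item \emph{Opinion part.} We have $\dot e_o=e_x-e_o-\Phi\bar L_u\Phi\,o$, using that the reduced field already absorbs the $-0.5\mathbf 1_n$ term. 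Write $\Phi\bar L_u\Phi\,o=\Phi\bar L_u\Phi\,e_o+c\,\Phi\bar L_u\Phi\mathbf 1_n$. The first term contributes $-e_o^\top\Phi\bar L_u\Phi e_o\le0$, because $\bar L_u$ is positive semidefinite by the symmetry in Assumption~\ref{ass:network}.
\end{itemize}

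\textbf{Step 3 (the main obstacle).} The remaining term is $c\,\Phi\bar L_u\Phi\mathbf 1_n$. Its $i$th entry is $c\sum_j[\bar A_u]_{ij}(1-s_is_j)$, where $s_i=\sgn_m(o_i)$, and it is nonzero only on edges where the signs $s_i$ and $s_j$ disagree. On such an edge, $o_i$ and $o_j$ lie on opposite sides of zero (one of them possibly equal to $0$). Whatever the sign of $c$, one of $|o_i-c|$ and $|o_j-c|$ is therefore at least $|c|$, which gives $|c|\le|e_{o,i}|+|e_{o,j}|$. Consequently $|c\,e_o^\top\Phi\bar L_u\Phi\mathbf 1_n|\le K\|e_o\|^2$, with $K$ depending only on $\bar A_u$. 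Combining the three steps yields $\dot W\le \kappa W$ almost everywhere for a constant $\kappa$. Since $W(0)=0$, Gr\"onwall's inequality gives $W\equiv0$. So every solution from $\mathcal S$ equals the lifted planar solution, which proves both positive invariance and the reduced dynamics~\eqref{eq:reduced-system}--\eqref{eq:psi-def}.

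I expect the main obstacle to be this control of the discontinuous gauge term in Step 3. I would double-check the edge case of the $\sgn_m(0)=1$ convention carefully, to confirm that a sign mismatch always forces $|c|\le|e_{o,i}|+|e_{o,j}|$.
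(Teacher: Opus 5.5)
Your proof is correct, but it takes a different and more rigorous route than the paper.

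\textbf{What the paper does.} It substitutes the synchronous ansatz and observes that the vector field on $\mathcal S$ has the form $(\alpha\mathbf 1_n,\beta\mathbf 1_n)$. It then checks the four edges of $[0,1]^2$ and concludes invariance from tangency alone. That step tacitly uses uniqueness of solutions through points of $\mathcal S$. The paper's own solution convention does not provide this at $\eta=0.5$, where $o=\mathbf 0_n$ lies on every switching surface.

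\textbf{What your proof adds.} Your Step~1 is the paper's computation. Your Steps~2--3 supply exactly the missing uniqueness. The sign-mismatch inequality $|c|\le|e_{o,i}|+|e_{o,j}|$ makes the discontinuous gauge term $c\,\Phi\bar L_u\Phi\mathbf 1_n$ quadratically small in $e_o$, so Gr\"onwall closes.

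The edge case you flagged is fine. Suppose $o_i\ge 0>o_j$, which covers $o_i=0$ under $\sgn_m(0)=1$. If $c\ge 0$, then $|e_{o,j}|=c-o_j>|c|$. If $c<0$, then $|e_{o,i}|=o_i-c\ge|c|$.

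\textbf{Trade-off.} The paper's argument is shorter and is adequate for synchronous trajectories that avoid $\eta=0.5$, since the field is smooth there. Yours actually justifies the ``every solution'' clause of the statement, including trajectories that pass through $o=\mathbf 0_n$.

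\textbf{Two minor simplifications.}
\begin{enumerate}
\item Positive semidefiniteness of $\bar L_u$ is not needed. The crude bound $|e_o^\top\Phi\bar L_u\Phi e_o|\le\|\bar L_u\|\,\|e_o\|^2$ already suffices for Gr\"onwall, so the symmetry of $\bar A_u$ plays no essential role.
\item You need not invoke Lemma~\ref{lem:invariance} to get a Lipschitz constant for the epidemic field. Both trajectories are continuous, hence bounded, on any compact interval $[0,T]$, and a local Lipschitz constant on a bounded set containing them is enough.
\end{enumerate}
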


\begin{proof}
By Definition~\ref{def:synchronous-states}, any $(x,o)\in\mathcal S$ can be
written as $x=\xi\mathbf{1}_n$ and $o=(\eta-0.5)\mathbf{1}_n$ for some
$(\xi,\eta)\in[0,1]^2$. Under
Assumption~\ref{ass:homogeneous-benchmark}, all diagonal rates become scalar
multiples of the identity:
\[
\mathcal D(o)=d(\eta)I_n,\quad
\mathcal B_q(o)=\tilde\beta_q(\eta)I_n,\quad q\in\{1,2\}.
\]
Also, $Ax=A(\xi\mathbf{1}_n)=a\xi\mathbf{1}_n$ by
Assumption~\ref{ass:homogeneous-benchmark} (ii). By Assumption~\ref{ass:homogeneous-benchmark}(iii),
\[
[\mathcal A_2(x)]_i
=x^\top B_i x
=\xi^2\mathbf1_n^\top B_i\mathbf1_n
=b\xi^2
\]
for every $i\in[n]$. Hence,
$\mathcal A_2(x)=b\xi^2\mathbf1_n$.
Substituting into~\eqref{eq:x_dynamics} with $X=\xi I_n$ gives
\[
\dot x
=\xi\Big(-d(\eta)+(1-\xi)\big(a\,\tilde\beta_1(\eta)+b\,\tilde\beta_2(\eta)\xi\big)\Big)\mathbf{1}_n
= \xi\,\psi(\xi,\eta)\,\mathbf{1}_n.
\]
For the opinion dynamics, since $o=(\eta-0.5)\mathbf{1}_n$, we have
$\Phi(o)=\sigma I_n$ with $\sigma=\sgn_m(\eta-0.5)\in\{-1,1\}$.
Therefore $\Phi(o)\bar L_u\Phi(o)o=(\eta-0.5)\bar L_u\mathbf{1}_n=0$, and
\[
\dot o=\xi\mathbf{1}_n-(\eta-0.5)\mathbf{1}_n-0.5\mathbf{1}_n=(\xi-\eta)\mathbf{1}_n.
\]
Since $(\dot x,\dot o)=(\alpha\mathbf{1}_n,\beta\mathbf{1}_n)$ for some scalars
$\alpha,\beta\in\mathbb R$, the vector field is tangent to $\mathcal S$.
It remains to be verified that $(\xi,\eta)$ stays in $[0,1]^2$:
\[
\xi=0 \Rightarrow \dot\xi=0,\quad
\xi=1 \Rightarrow \dot\xi=-d(\eta)\le -\gamma^{\min}<0,
\]
\[
\eta=0 \Rightarrow \dot\eta=\xi\ge 0,\quad
\eta=1 \Rightarrow \dot\eta=\xi-1\le 0.
\]
Hence, $\mathcal S$ is positively invariant. Identifying
$\dot\xi=\xi\psi(\xi,\eta)$ and $\dot\eta=\xi-\eta$ from the expressions
above yields~\eqref{eq:reduced-system}.
\end{proof}

At any equilibrium
of~\eqref{eq:reduced-system}, $\dot\eta=\xi-\eta=0$, so $\eta^*=\xi^*$.
Thus, either $\xi^*=0$ or $\psi(\xi^*,\xi^*)=0$. Define
\begin{equation}\label{eq:g-def}
g(s)\coloneqq \psi(s,s)
= -d(s)
+ (1-s)\big(a\,\tilde\beta_1(s)
+ b\,\tilde\beta_2(s)\,s\big),
\end{equation}
for $s\in[0,1]$. Hence, the positive equilibria
of~\eqref{eq:reduced-system} are the points $(s,s)$ satisfying
$g(s)=0$, and these correspond to equilibria of~\eqref{eq:HOI-SIOS} of the
form $(s\mathbf{1}_n,(s-0.5)\mathbf{1}_n)\in\mathcal S$.
Under the condition $a\bar\beta_1<\gamma^{\min}$ used below, we have
$g(0)<0$, while $g(1)<0$. Therefore, $g(\hat s)>0$ at some intermediate
prevalence implies at least two zero crossings in $(0,1)$. Thus, $g(\hat s)>0$ provides a directly checkable condition for multiple
positive equilibria on $\mathcal S$; it does not assume synchronization from
arbitrary initial conditions.
We now state the main bistability result for system~\eqref{eq:HOI-SIOS}
relative to the synchronous invariant set $\mathcal S$.
\begin{proposition}[Pairwise-only case]
\label{prop:pairwise-only-benchmark}
Suppose Assumption~\ref{ass:homogeneous-benchmark} holds and
$a\bar\beta_1<\gamma^{\min}$. If the higher-order transmission term is absent,
i.e., $b\,\tilde\beta_2(s)\equiv0$, then $g(s)<0$ for every $s\in[0,1]$.
Consequently, the reduced system~\eqref{eq:reduced-system} has no endemic
equilibrium and cannot exhibit healthy--endemic bistability.
\end{proposition}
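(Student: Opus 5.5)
With the higher-order term removed, $g(s) = -d(s) + (1-s)\,a\,\tilde\beta_1(s)$. The plan is to bound the two pieces of $g$ separately on $[0,1]$, using only monotonicity of the affine rates, and then read off the equilibrium structure of \eqref{eq:reduced-system}.

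\textbf{Step 1: bound the removal term.} Since $\bar\gamma\ge\gamma^{\min}$, the function $d(s)=\gamma^{\min}+(\bar\gamma-\gamma^{\min})s$ is nondecreasing in $s$. Hence
\[
d(s)\ge\gamma^{\min}\quad\text{for all } s\in[0,1].
\]

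\textbf{Step 2: bound the transmission term.} Since $\bar\beta_1\ge\beta_1^{\min}\ge 0$, the function $\tilde\beta_1(s)$ is nonincreasing on $[0,1]$. It is also nonnegative there, because $\tilde\beta_1(1)=\beta_1^{\min}\ge0$. Together with $a\ge0$ and $0\le 1-s\le 1$, this gives
\[
0\le (1-s)\,a\,\tilde\beta_1(s)\le a\,\tilde\beta_1(0)=a\bar\beta_1 .
\]

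\textbf{Step 3: combine.} For every $s\in[0,1]$,
\[
g(s)\le -\gamma^{\min}+a\bar\beta_1<0,
\]
where the last inequality is the hypothesis $a\bar\beta_1<\gamma^{\min}$. No case split is needed.

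\textbf{Step 4: consequences for the reduced system.} As noted before the proposition, any equilibrium of \eqref{eq:reduced-system} satisfies $\eta^*=\xi^*$, and either $\xi^*=0$ or $g(\xi^*)=0$. Since $g<0$ on $[0,1]$, the second alternative is impossible. The only equilibrium in $[0,1]^2$ is therefore $(0,0)$, which corresponds to the healthy synchronous state. Because there is no endemic equilibrium, healthy--endemic bistability on $\mathcal S$ cannot occur.

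A stronger statement is also available, though the proposition does not require it. For $\xi\in[0,1]$ one has $\psi(\xi,\eta)\le -\gamma^{\min}+a\bar\beta_1$ for any $\eta\in[0,1]$, since the same two bounds hold with $\eta$ in place of $s$. Consequently $\xi(t)$ decays exponentially to $0$ on $\mathcal S$.

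There is no substantive obstacle. The only point needing care is the sign of $\tilde\beta_1$ on $[0,1]$, which is needed for the upper bound in Step 2 to be valid. It follows from $\tilde\beta_1(1)=\beta_1^{\min}\ge 0$ and the affine form of $\tilde\beta_1$.
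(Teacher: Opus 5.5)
Your proposal is correct and follows essentially the same argument as the paper: bound $d(s)\ge\gamma^{\min}$ and $(1-s)\,a\,\tilde\beta_1(s)\le a\bar\beta_1$ to get $g(s)\le a\bar\beta_1-\gamma^{\min}<0$ on $[0,1]$. You also check $\tilde\beta_1\ge 0$ on $[0,1]$ explicitly, which the paper leaves implicit when it uses $1-s\le 1$.
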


\begin{proof}
When $b\,\tilde\beta_2(s)\equiv0$,
\(
g(s)=-d(s)+(1-s)a\,\tilde\beta_1(s).
\)
Since $d(s)\ge\gamma^{\min}$,
$\tilde\beta_1(s)\le\bar\beta_1$, and $1-s\le1$, we obtain
\[
g(s)\le-\gamma^{\min}+a\bar\beta_1<0
\]
for every $s\in[0,1]$. Hence, $g$ has no positive root.
\end{proof}

\begin{theorem}[Bistability relative to the synchronous set]
\label{thm:reduced-bistability}
Consider system~\eqref{eq:HOI-SIOS} under
\cref{ass:parameters,ass:network,ass:homogeneous-benchmark}, and let $g$ be
defined by~\eqref{eq:g-def}. Suppose that
\begin{equation}
\label{eq:healthy-reduced-condition}
a\bar\beta_1<\gamma^{\min},
\end{equation}
and that there exists $\hat s\in(0,1)$ such that $g(\hat s)>0$. Then the
following statements hold:
\begin{enumerate}[label=(\roman*)]
    \item system~\eqref{eq:HOI-SIOS} admits at least two endemic equilibria
    in $\mathcal S$;
    \item the healthy equilibrium
    $(\mathbf 0_n,-0.5\mathbf 1_n)$ is locally exponentially stable relative
    to $\mathcal S$.
\end{enumerate}
Suppose, in addition, that $g$ has exactly two simple roots
$0<s_-<s_+<1$ and that
\begin{equation}
\label{eq:large-root-trace-condition}
s_+\,\partial_\xi\psi(s_+,s_+)<1.
\end{equation}
Then system~\eqref{eq:HOI-SIOS} exhibits bistability relative to $\mathcal S$:
\begin{enumerate}[label=(\roman*)]
\setcounter{enumi}{2}
    \item the endemic equilibrium
    $\bigl(s_-\mathbf 1_n,(s_--0.5)\mathbf 1_n\bigr)$ is unstable relative
    to $\mathcal S$;
    \item the endemic equilibrium
    $\bigl(s_+\mathbf 1_n,(s_+-0.5)\mathbf 1_n\bigr)$ is locally
    exponentially stable relative to $\mathcal S$.
\end{enumerate}
\end{theorem}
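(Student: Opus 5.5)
Everything reduces to the planar system \eqref{eq:reduced-system} via Lemma~\ref{lem:synchronous-manifold}. The map $(\xi,\eta)\mapsto(\xi\mathbf 1_n,(\eta-0.5)\mathbf 1_n)$ is an isometry up to the factor $\sqrt n$. Hence stability relative to $\mathcal S$ in the sense of Definition~\ref{def:relative-stability} is equivalent to the corresponding stability of $(\xi^*,\eta^*)$ for the planar system on $[0,1]^2$.

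\textbf{Part (i): two endemic equilibria.} From \eqref{eq:g-def}, $g(0)=-\gamma^{\min}+a\bar\beta_1<0$ by \eqref{eq:healthy-reduced-condition}. Also $g(1)=-\bar\gamma\le-\gamma^{\min}<0$. Since $g$ is a cubic polynomial, hence continuous, and $g(\hat s)>0$, the intermediate value theorem yields roots in $(0,\hat s)$ and in $(\hat s,1)$. Each root $s$ gives an equilibrium $(s\mathbf 1_n,(s-0.5)\mathbf 1_n)$ with $s\mathbf 1_n\gg\mathbf 0_n$, so it is endemic.

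\textbf{Part (ii): the healthy equilibrium.} The Jacobian of \eqref{eq:reduced-system} is
\[
J(\xi,\eta)=\begin{bmatrix}\psi+\xi\,\partial_\xi\psi & \xi\,\partial_\eta\psi\\ 1 & -1\end{bmatrix}.
\]
At $(0,0)$ it is lower triangular with eigenvalues $\psi(0,0)=a\bar\beta_1-\gamma^{\min}<0$ and $-1$. The vector field is polynomial on $[0,1]^2$, so Lyapunov's indirect method, applied to a smooth polynomial extension of the field to a neighborhood of $[0,1]^2$ and restricted to the invariant square, gives local exponential stability. (If one prefers to avoid the full system, note that near $o=-0.5\mathbf 1_n$ the opinions stay off the switching surfaces anyway.)

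\textbf{Parts (iii)--(iv): the endemic equilibria.} At an endemic equilibrium $(s,s)$ we have $\psi=0$, so
\[
\operatorname{tr}J=s\,\partial_\xi\psi-1,\qquad \det J=-s\,\partial_\xi\psi-s\,\partial_\eta\psi=-s\,g'(s),
\]
using $g'(s)=\partial_\xi\psi+\partial_\eta\psi$ at $(s,s)$. With exactly two simple roots and the sign pattern $g(0)<0$, $g(\hat s)>0$, $g(1)<0$, we get $g'(s_-)>0$ and $g'(s_+)<0$. The argument is that $g$ is negative on $(0,s_-)$ and on $(s_+,1)$, positive on $(s_-,s_+)$, and the roots are simple. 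Hence $\det J(s_-,s_-)<0$, which gives a saddle and thus instability. An interior point of $[0,1]^2$ still has neighbors in the set, so instability relative to $\mathcal S$ follows. At $s_+$ we have $\det J>0$, and $\operatorname{tr}J<0$ by \eqref{eq:large-root-trace-condition}, so $J$ is Hurwitz. Since $s_+\in(0,1)$ lies in the interior, the field is smooth there and Lyapunov's indirect method gives local exponential stability.

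\textbf{Main obstacle.} The step most needing care is the sign determination of $g'$ at the roots. It uses the simple-root hypothesis together with the endpoint signs, and it requires checking that the bistability claim is read relative to $\mathcal S$, not to the full space. A secondary point is justifying the linearization at the boundary point $(0,0)$ of the invariant square. This is handled by using the smooth extension of the polynomial field, so no issue arises with $\Phi$ in the reduced coordinates.
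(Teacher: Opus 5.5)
Your proposal is correct and follows the paper's proof essentially step for step: the intermediate value theorem on $g$ using $g(0)<0$, $g(\hat s)>0$, $g(1)<0$; linearization of the reduced planar system at $(0,0)$; and the trace/determinant test at $(s,s)$ with $\det J_r=-s\,g'(s)$ and the signs of $g'$ at the two simple roots. Your extra care — the $\sqrt n$-scaled isometry between $\mathcal S$ and $[0,1]^2$, the polynomial extension at the corner $(0,0)$, and the explicit sign-pattern argument for $g'(s_\pm)$ — fills in details the paper leaves implicit.
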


\begin{proof}
By Lemma~\ref{lem:synchronous-manifold}, every solution initialized in
$\mathcal S$ remains in $\mathcal S$ and is represented by the reduced
system~\eqref{eq:reduced-system}. At an equilibrium of the reduced system,
$\dot\eta=0$ implies $\eta^*=\xi^*$. Hence, its positive equilibria are
exactly the points $(s,s)$ with $s\in(0,1)$ satisfying $g(s)=0$.

By~\eqref{eq:healthy-reduced-condition},
\(
g(0)=a\bar\beta_1-\gamma^{\min}<0,
\)
while $g(1)=-\bar\gamma<0$. Since $g(\hat s)>0$ for some
$\hat s\in(0,1)$, the intermediate value theorem implies that $g$ has at
least two roots in $(0,1)$. Each root gives an endemic equilibrium
$\bigl(s\mathbf 1_n,(s-0.5)\mathbf 1_n\bigr)\in\mathcal S$.
At $(\xi,\eta)=(0,0)$, the Jacobian of the reduced system is
\[
J_r(0,0)=
\begin{bmatrix}
a\bar\beta_1-\gamma^{\min} & 0\\
1 & -1
\end{bmatrix}.
\]
Its eigenvalues are $a\bar\beta_1-\gamma^{\min}<0$ and $-1<0$.
Therefore, $(\mathbf 0_n,-0.5\mathbf 1_n)$ is locally exponentially stable
relative to $\mathcal S$.

Now suppose that $g$ has exactly two simple roots
$0<s_-<s_+<1$. At a positive equilibrium $(s,s)$, the reduced Jacobian is
\[
J_r(s,s)=
\begin{bmatrix}
s\,\partial_\xi\psi(s,s) & s\,\partial_\eta\psi(s,s)\\
1 & -1
\end{bmatrix},
\]
and
\begin{equation}
\label{eq:reduced-det-trace}
\det J_r(s,s)=-s g'(s),
\qquad
\operatorname{tr}J_r(s,s)=s\,\partial_\xi\psi(s,s)-1.
\end{equation}
Since $g(0)<0$ and $s_-$ is the first simple zero crossing,
$g'(s_-)>0$. Thus, $\det J_r(s_-,s_-)<0$, so $(s_-,s_-)$ is a hyperbolic
saddle and the corresponding endemic equilibrium is unstable relative to
$\mathcal S$.

Since $s_+$ is the second simple zero crossing and $g(1)<0$, we have
$g'(s_+)<0$. Therefore, $\det J_r(s_+,s_+)>0$. Moreover,
condition~\eqref{eq:large-root-trace-condition} gives
$\operatorname{tr}J_r(s_+,s_+)<0$. Hence, $J_r(s_+,s_+)$ is Hurwitz.
Since the reduced vector field is smooth, Lyapunov's indirect method
implies that $(s_+,s_+)$, and thus
$\bigl(s_+\mathbf 1_n,(s_+-0.5)\mathbf 1_n\bigr)$, is locally
exponentially stable relative to $\mathcal S$.
\end{proof}
The quadratic higher-order term vanishes from the disease-free linearization,
so the healthy equilibrium can remain locally stable when pairwise transmission
is subcritical. At positive infection levels, however, higher-order transmission
can support a stable endemic equilibrium on $\mathcal S$, yielding bistability.
These conclusions apply to trajectories in $\mathcal S$ and do not imply
synchronization from arbitrary initial conditions.



\subsection{Endemic Regime}
When $\mathcal R_{\min}>1$, the pairwise epidemic linearization is
supercritical even at the most concerned opinion profile. Under the homogeneous
setting, we first establish the existence of an endemic equilibrium on
$\mathcal S$. For the general heterogeneous model, we then prove that every
healthy equilibrium away from the opinion switching surfaces is unstable.
These statements do not, by themselves, establish uniform persistence or
global stability of an endemic equilibrium.

\begin{proposition}[Existence of an endemic equilibrium on $\mathcal S$ when $\mathcal R_{\min}>1$]
\label{cor:Rmin-endemic-S}
Suppose \cref{ass:parameters,ass:network,ass:homogeneous-benchmark} hold. If
\(
\mathcal R_{\min}>1,
\)
then system~\eqref{eq:HOI-SIOS} admits at least one endemic equilibrium lying
in $\mathcal S$.
\end{proposition}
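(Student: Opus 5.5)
Under Assumption~\ref{ass:homogeneous-benchmark}, we reduce to the scalar function $g$ in~\eqref{eq:g-def}. As computed before Theorem~\ref{thm:reduced-bistability}, the positive equilibria of~\eqref{eq:HOI-SIOS} lying in $\mathcal S$ are exactly the points $(s\mathbf 1_n,(s-0.5)\mathbf 1_n)$ with $s\in(0,1]$ and $g(s)=0$. Since $g(1)=-\bar\gamma<0$, it suffices to show $g(0)>0$. The intermediate value theorem then yields a root $s^*\in(0,1)$, and hence an endemic equilibrium $x^*=s^*\mathbf 1_n\gg\mathbf 0_n$ in $\mathcal S$.

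The key step is to bound $g$ from below by a linear function of $s$. Because $A\ge 0$ and $A\mathbf 1_n=a\mathbf 1_n$, the Perron--Frobenius theorem gives $\rho(A)=a$. Indeed, for a nonnegative matrix with constant row sums, the spectral radius equals that row sum. Under the homogeneous assumption, $\Gamma^{-1}B_1^{\min}A=(\beta_1^{\min}/\bar\gamma)A$, so
\[
\mathcal R_{\min}=\frac{a\beta_1^{\min}}{\bar\gamma}.
\]
The hypothesis $\mathcal R_{\min}>1$ is therefore equivalent to $a\beta_1^{\min}>\bar\gamma$.

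For $s\in[0,1]$ we have $d(s)\le\bar\gamma$, $\tilde\beta_1(s)\ge\beta_1^{\min}$, and $b\,\tilde\beta_2(s)\,s\ge 0$. Hence
\[
g(s)\ge -\bar\gamma+(1-s)\,a\beta_1^{\min}.
\]
At $s=0$ this gives $g(0)\ge a\beta_1^{\min}-\bar\gamma>0$. More directly, $g(0)=a\bar\beta_1-\gamma^{\min}\ge a\beta_1^{\min}-\bar\gamma>0$. Combined with $g(1)<0$ and continuity of $g$ (it is a polynomial), there is a root $s^*$ satisfying
\[
0<s^*\le 1-\frac{\bar\gamma}{a\beta_1^{\min}}<1,
\]
where the upper bound follows because the lower bound on $g$ stays positive for smaller $s$.

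The final step checks that the root corresponds to a genuine equilibrium of the full system~\eqref{eq:HOI-SIOS}. Set $(x^*,o^*)=(s^*\mathbf 1_n,(s^*-0.5)\mathbf 1_n)$. By Lemma~\ref{lem:synchronous-manifold}, the full vector field at this point equals $(s^*g(s^*)\mathbf 1_n,\,0\cdot\mathbf 1_n)=0$. This uses the cancellation $\Phi\bar L_u\Phi\mathbf 1_n=0$, which holds for either sign of $s^*-0.5$, so the switching surfaces cause no difficulty. The point lies in $\overline{\mathbf D}$ and satisfies $x^*\gg\mathbf 0_n$, so it is endemic in the sense of Definition~\ref{def:equilibria}.

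The main obstacle is minor: correctly identifying $\mathcal R_{\min}$ with $a\beta_1^{\min}/\bar\gamma$ via $\rho(A)=a$. We also need to handle the degenerate case $\beta_1^{\min}=0$ or $a=0$, but there $\mathcal R_{\min}=0$, which is excluded by the hypothesis.
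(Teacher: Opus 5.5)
Your core argument is correct and is the same as the paper's. You identify $\mathcal R_{\min}=a\beta_1^{\min}/\bar\gamma$ via $\rho(A)=a$, obtain $g(0)=a\bar\beta_1-\gamma^{\min}\ge a\beta_1^{\min}-\bar\gamma>0$ and $g(1)=-\bar\gamma<0$, apply the intermediate value theorem, and lift the root to an endemic equilibrium of~\eqref{eq:HOI-SIOS} in $\mathcal S$ via Lemma~\ref{lem:synchronous-manifold}.

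\textbf{Error in the root localization.} Your extra claim $0<s^*\le 1-\bar\gamma/(a\beta_1^{\min})$ has the inequality reversed. Write $s_0\coloneqq 1-\bar\gamma/(a\beta_1^{\min})$. Your bound $g(s)\ge-\bar\gamma+(1-s)a\beta_1^{\min}$ is strictly positive for every $s<s_0$. So $g$ has \emph{no} zeros in $[0,s_0)$, and every root satisfies $s^*\ge s_0$.

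A concrete counterexample: take $a=\bar\gamma=\gamma^{\min}=1$, $\beta_1^{\min}=2$, $\bar\beta_1=4$, and $b=0$. Then $s_0=1/2$, but the only root of $g(s)=2s^2-6s+3$ in $(0,1)$ is $(3-\sqrt3)/2\approx0.634$.

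This localization plays no role in the existence argument. Delete it, or replace it with the correct statement $s^*\in[s_0,1)$.
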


\begin{proof}
Under Assumption~\ref{ass:homogeneous-benchmark}, we have
\(
\Gamma=\bar\gamma I_n,
\;
B_1^{\min}=\beta_1^{\min}I_n,
\text{ and }
A\mathbf{1}_n=a\mathbf{1}_n.
\)
Since $A\mathbf{1}_n=a\mathbf{1}_n$ and $A\ge 0$, it follows that $\rho(A)=a$.
Thus,
\[
\mathcal R_{\min}
=
\rho(\Gamma^{-1}B_1^{\min}A)
=
\frac{\beta_1^{\min}}{\bar\gamma}\rho(A)
=
\frac{a\beta_1^{\min}}{\bar\gamma}.
\]
Hence, $\mathcal R_{\min}>1$ implies
\(
a\beta_1^{\min}>\bar\gamma.
\)

Now consider the function
\(g(s)\) defined in~\eqref{eq:g-def}.
At $s=0$,
\[
g(0)=a\bar\beta_1-\gamma^{\min}
\ge a\beta_1^{\min}-\bar\gamma
>0,
\]
since $\bar\beta_1\ge \beta_1^{\min}$ and $\gamma^{\min}\le \bar\gamma$.
Also,
\[
g(1)=-d(1)=-\bar\gamma<0.
\]
By the continuity of $g$, there exists $s^*\in(0,1)$ such that $g(s^*)=0$.
By Lemma~\ref{lem:synchronous-manifold}, this yields an equilibrium of
system~\eqref{eq:HOI-SIOS} of the form
\[
(x^*,o^*)=\big(s^*\mathbf{1}_n,(s^*-0.5)\mathbf{1}_n\big)\in\mathcal S.
\]
Since $s^*\in(0,1)$, we have $x^*=s^*\mathbf{1}_n\gg \mathbf{0}_n$, so this equilibrium
is endemic.
\end{proof}
\begin{proposition}[Instability of healthy equilibria when $\mathcal R_{\min}>1$]
\label{prop:Rmin-instability}
Consider system~\eqref{eq:HOI-SIOS} under Assumptions~\ref{ass:parameters}
and~\ref{ass:network}. If $\mathcal R_{\min}>1$, then every healthy
equilibrium $(\mathbf{0}_n,o^*)\in\overline{\mathbf D}$ satisfying
\(o_i^*\neq 0\) for all \(i\in[n]\) is unstable.
\end{proposition}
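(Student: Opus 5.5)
Our plan is to linearize at an arbitrary healthy equilibrium $(\mathbf 0_n,o^*)$ with $o_i^*\neq0$ for all $i$, and to show that the Jacobian has an eigenvalue with positive real part. Since every $o_i^*\neq 0$, there is a neighborhood of $o^*$ on which $\Phi(o)\equiv\Phi(o^*)$ is constant. On that neighborhood system~\eqref{eq:HOI-SIOS} is smooth, so the block Jacobian $J(x,o)$ derived above is valid and Lyapunov's indirect method \cite[Theorem~4.7(ii)]{khalil2002nonlinear} applies.

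First we evaluate the blocks at $x=\mathbf 0_n$, exactly as in the proof of Proposition~\ref{prop:local-stability}. We have $y(\mathbf 0_n,o^*)=\mathbf 0_n$, $X=0$, and $J_{\mathcal A_2}(\mathbf 0_n)=0$. Moreover $J_{xo}(\mathbf 0_n,o^*)=0$, because every entry of $J_{xo}$ is multiplied by $x$ or by $\mathcal A_2(x)$. Hence
\[
J(\mathbf 0_n,o^*)=\begin{bmatrix}-\mathcal D(o^*)+\mathcal B_1(o^*)A & 0\\ I_n & -(I_n+\Phi(o^*)\bar L_u\Phi(o^*))\end{bmatrix},
\]
which is lower block triangular. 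Its spectrum therefore contains that of the Metzler matrix $Q\coloneqq-\mathcal D(o^*)+\mathcal B_1(o^*)A$, and it suffices to show $\mathbf s(Q)>0$.

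Next we compare $Q$ with the maximally concerned matrix $Q_{\min}\coloneqq-\Gamma+B_1^{\min}A$. Since $o^{*\prime}_i\in[0,1]$, the definitions~\eqref{eq:D_def}--\eqref{eq:B_def} give $\mathcal D(o^*)\le\Gamma$ and $\mathcal B_1(o^*)\ge B_1^{\min}$ entrywise. With $A\ge0$ this yields $Q\ge Q_{\min}$ entrywise, and both matrices are Metzler. By monotonicity of the spectral abscissa for Metzler matrices (shift both by $cI_n$ to make them nonnegative and use monotonicity of $\rho$), we get $\mathbf s(Q)\ge\mathbf s(Q_{\min})$.

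The remaining step, and the only nontrivial one, is to show that $\mathcal R_{\min}=\rho(\Gamma^{-1}B_1^{\min}A)>1$ implies $\mathbf s(-\Gamma+B_1^{\min}A)>0$. This is the standard equivalence for $-D+N$ with $D$ positive diagonal and $N\ge0$: $\mathbf s(-D+N)<0$, $=0$, or $>0$ according as $\rho(D^{-1}N)<1$, $=1$, or $>1$. This is the same fact cited from \cite{liu2019analysis} in Proposition~\ref{prop:local-stability}, now applied with $\Gamma$ in place of $\Gamma_{\min}$. If a self-contained argument is preferred, take a Perron vector $v\ge0$, $v\neq0$, with $\Gamma^{-1}B_1^{\min}Av=\mathcal R_{\min}v$. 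Then $Q_{\min}v=(\mathcal R_{\min}-1)\Gamma v$, so $Q_{\min}v\ge\epsilon v$ for some $\epsilon>0$ (on the support of $v$; off the support both sides vanish). The Collatz–Wielandt bound for Metzler matrices then gives $\mathbf s(Q_{\min})\ge\epsilon>0$.

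Combining the steps, $J(\mathbf 0_n,o^*)$ has an eigenvalue with positive real part, so $(\mathbf 0_n,o^*)$ is unstable. Reducibility of $A$ causes no difficulty, since both the Perron argument and the monotonicity of $\mathbf s$ hold for general nonnegative and Metzler matrices.
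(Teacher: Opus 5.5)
Your proof is correct and follows essentially the same route as the paper: $\Phi$ is locally constant, $J_{xo}(\mathbf 0_n,o^*)=0$ makes the Jacobian lower block triangular, you compare entrywise with the maximally concerned profile, and you use the Metzler equivalence between $\rho(D^{-1}N)>1$ and $\mathbf s(-D+N)>0$. The only difference is the order of the last two steps. The paper compares the next-generation matrices $\mathcal D(o^*)^{-1}\mathcal B_1(o^*)A\ge\Gamma^{-1}B_1^{\min}A$ via monotonicity of $\rho$ and then converts to the spectral abscissa. You instead compare $-\mathcal D(o^*)+\mathcal B_1(o^*)A\ge-\Gamma+B_1^{\min}A$ via monotonicity of $\mathbf s$ and convert at the extremal matrix, and you also give a self-contained Perron/Collatz--Wielandt justification of the conversion step that the paper only asserts.
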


\begin{proof}
Let $(\mathbf{0}_n,o^*)$ be a healthy equilibrium satisfying \(o_i^*\neq 0\) for all
\(i\in[n]\). Then there exists a neighborhood of \((\mathbf{0}_n,o^*)\) on which
\(\Phi(o)\) is constant. Hence, in that neighborhood,
system~\eqref{eq:HOI-SIOS} coincides with a smooth ODE, and Lyapunov’s indirect method applies.

Since the HOI term is quadratic in $x$, it vanishes at $x=\mathbf{0}_n$. Therefore, the
epidemic block of the Jacobian at $(\mathbf{0}_n,o^*)$ is
\[
J_{xx}(\mathbf{0}_n,o^*)=-\mathcal D(o^*)+\mathcal B_1(o^*)A,
\]
which is a Metzler matrix.
By Assumption~\ref{ass:parameters}, for every $o^*\in[-0.5,0.5]^n$,
\[
\mathcal D(o^*)\le \Gamma,
\quad
\mathcal B_1(o^*)\ge B_1^{\min},
\]
componentwise on the diagonal entries. Since $\mathcal D(o^*)$ and $\Gamma$ are
positive diagonal matrices, inversion reverses the order:
\[
\mathcal D(o^*)^{-1}\ge \Gamma^{-1}.
\]
Hence,
\[
\mathcal D(o^*)^{-1}\mathcal B_1(o^*)A
\ge
\Gamma^{-1}B_1^{\min}A
\ge 0,
\]
componentwise. By the monotonicity of the spectral radius for nonnegative
matrices,
\[
\rho\big(\mathcal D(o^*)^{-1}\mathcal B_1(o^*)A\big)
\ge
\rho\big(\Gamma^{-1}B_1^{\min}A\big)
=
\mathcal R_{\min}.
\]
Therefore, if $\mathcal R_{\min}>1$, then
\[
\mathbf s\!\left(
-\mathcal D(o^*)+\mathcal B_1(o^*)A
\right)>0.
\]
Moreover, $J_{xo}(\mathbf0_n,o^*)=0$, so the full Jacobian at
$(\mathbf0_n,o^*)$ is lower block triangular and contains the eigenvalues
of $J_{xx}(\mathbf0_n,o^*)$. Hence, the full Jacobian has an eigenvalue with
positive real part, and Lyapunov's indirect method implies that
$(\mathbf0_n,o^*)$ is unstable
\cite[Theorem~4.7(ii)]{khalil2002nonlinear}.
\end{proof}

\section{Simulations}\label{sec:simulations}

We consider a network of $n=5$ communities. The matrices $A$ and $\bar A_u$
are binary, with an entry equal to one when the corresponding communities are
neighbors, and we set $A=\bar A_u$. For each $i\in[n]$, the matrix $B_i$ is binary: distinct-index entries encode the included 2-simplices, while
$b_{iii}=1$ represents within-community quadratic exposure. We choose
$\gamma^{\min}=6$, $\beta_1^{\min}=0.15$, $\beta_2^{\min}=0.3$,
$\gamma_i=6.1$, $\beta_{i1}=0.28$, and $\beta_{i2}=0.4$ for every $i\in[n]$, i.e., homogeneous spreading.
For this parameter set, $\mathcal R_{\max}=0.20$ and
$\rho\!\left(\Gamma_{\min}^{-1}(\beta_1A+\beta_2C)\right)=0.93$.
Theorem~\ref{thm:global-stability-lyap} therefore guarantees exponential
eradication of the infection state, as illustrated in Fig.~\ref{fig:Theo1}
(top). The opinion trajectory in Fig.~\ref{fig:Theo1} (bottom) numerically
approaches a dissensus profile. 
The next example illustrates why $\mathcal R_{\max}<1$ alone does not imply
global eradication in the presence of higher-order transmission.

\begin{figure}
    \centering
    \includegraphics[width=\linewidth]{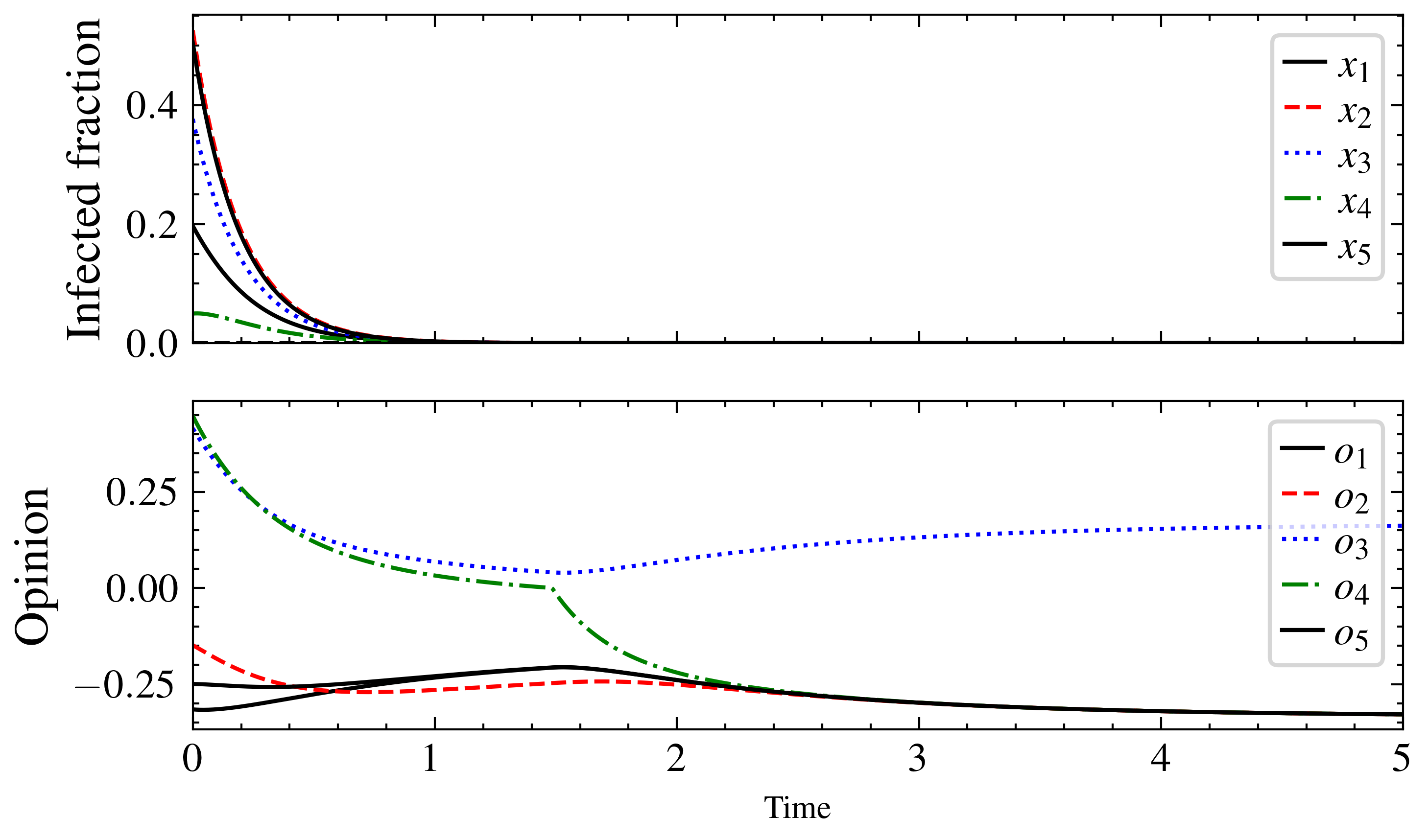}
    \caption{Evolution of the infection (top) and opinion (bottom)
dynamics when
$\rho\big(\Gamma_{\min}^{-1}(\beta_1A+\beta_2C)\big)<1$.
The infection state converges exponentially to $\mathbf{0}_n$, while the
opinion trajectory numerically approaches a dissensus profile.}
    \label{fig:Theo1}
\end{figure}

For the bistable example, choose
$\gamma^{\min}=1$, $\beta_1^{\min}=0.28$, $\beta_2^{\min}=0.4$,
$\gamma_i=1.1$, $\beta_{i1}=0.35$, and $\beta_{i2}=0.8$ for every $i\in[n]$.
The higher-order interaction matrices satisfy
$A\mathbf1_5=2\mathbf1_5$ and
$\mathbf1_5^\top B_i\mathbf1_5=5$ for every $i$, so Assumption~\ref{ass:homogeneous-benchmark}
holds with $a=2$ and $b=5$. Thus $\mathcal R_{\max}=0.7$, while
$\rho\!\left(\Gamma_{\min}^{-1}(\beta_1A+\beta_2C)\right)=20.7$.
We choose identical infection and opinion initial conditions across communities,
so the trajectories start in $\mathcal S$. 
For these parameters, $g$ has exactly two simple roots in $(0,1)$,
$s_-=0.1281$ and $s_+=0.5113$. Moreover,
\[
s_+\partial_\xi\psi(s_+,s_+)-1=-1.356<0.
\]
Thus, the conditions of Theorem~\ref{thm:reduced-bistability} are satisfied.
The healthy equilibrium $(\mathbf0_n,-0.5\mathbf1_n)$ and the larger endemic
equilibrium $(0.5113\mathbf1_n,0.0113\mathbf1_n)$ are locally exponentially
stable relative to $\mathcal S$, whereas the smaller endemic equilibrium
$(0.1281\mathbf1_n,-0.3719\mathbf1_n)$ is unstable relative to $\mathcal S$.
Figure~\ref{fig:Theo2} shows trajectories from selected initial conditions in
$\mathcal S$ that approach the healthy or larger endemic equilibrium,
consistent with Theorem~\ref{thm:reduced-bistability}.

\begin{figure}
    \centering
    \includegraphics[width=\linewidth]{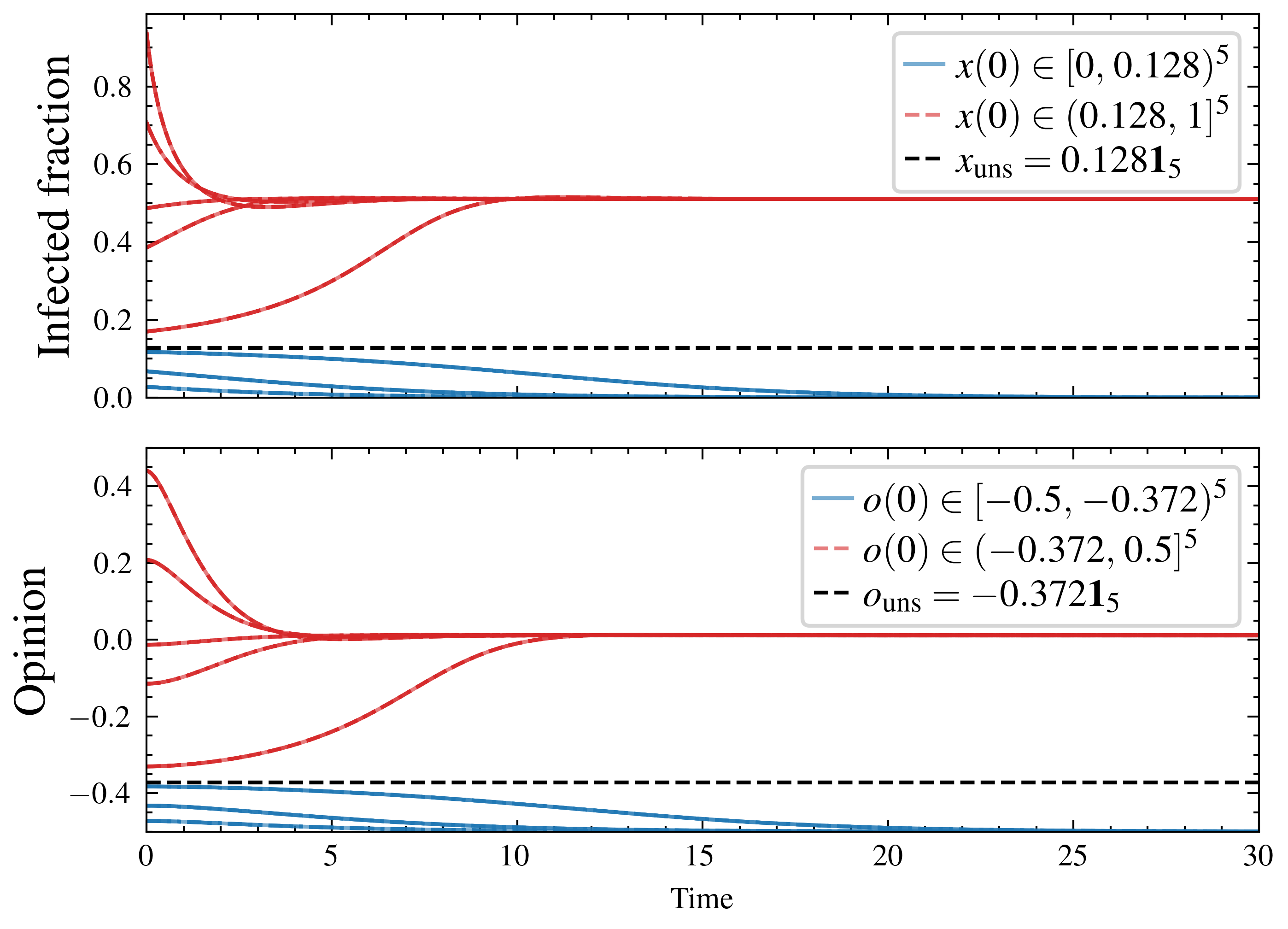}
    \caption{Selected trajectories of the synchronous reduced dynamics for the
bistable parameter set. Red trajectories approach the larger endemic
equilibrium $(0.5113\mathbf1_n,0.0113\mathbf1_n)$, whereas blue trajectories
approach the healthy equilibrium $(\mathbf0_n,-0.5\mathbf1_n)$. The black
dashed lines mark the smaller unstable endemic equilibrium
$(0.1281\mathbf1_n,-0.3719\mathbf1_n)$.}
    \label{fig:Theo2}
\end{figure}

Finally, we return to the first parameter set and change only the recovery rates
to $\gamma^{\min}=0.4$ and $\gamma_i=0.5$ for every $i\in[n]$. This gives
$\mathcal R_{\min}=1.10>1$, so
Proposition~\ref{prop:Rmin-instability} implies that every healthy equilibrium
away from the opinion switching surfaces is unstable. Figure~\ref{fig:Endemic}
shows one trajectory that numerically approaches an endemic-dissensus state.
This observation is consistent with the proposition, but it does not establish
global persistence or stability of the endemic state.

\begin{figure}
    \centering
    \includegraphics[width=\linewidth]{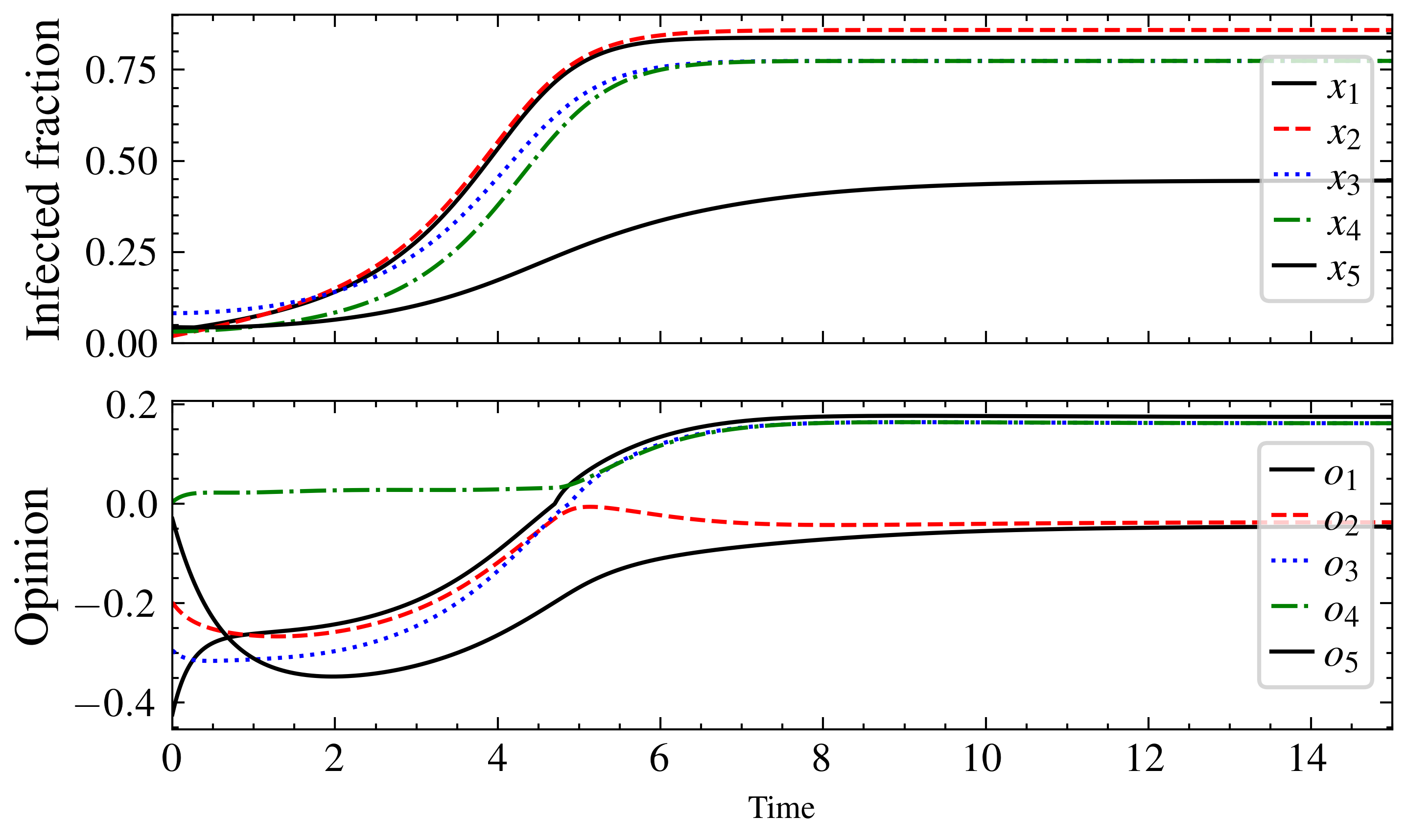}
    \caption{Trajectory approaching an endemic-dissensus state when
$\mathcal R_{\min}>1$. The instability of the healthy equilibrium
$(\mathbf0_n,-0.5\mathbf1_n)$ is consistent with
Proposition~\ref{prop:Rmin-instability}.}
    \label{fig:Endemic}
\end{figure}

\section{Conclusion}\label{sec:conclusion}
We introduced the HOI-SIOS model, in which epidemic prevalence and signed
opinion dynamics are coupled. We established local stability and instability conditions
for the maximally skeptical healthy equilibrium and a sufficient condition for
global exponential eradication of the infection state. Under a homogeneous
setting, the dynamics on a synchronous invariant set reduce to a planar
system and can exhibit healthy--endemic bistability. The quadratic higher-order term does not affect the disease-free linearization,
but it can sustain an endemic equilibrium at positive infection levels; the pairwise-only
case rules out this mechanism under the same healthy-stability condition.
We also showed that, when $\mathcal R_{\min}>1$, every healthy equilibrium away
from the opinion switching surfaces is unstable. Future work will examine
transverse stability beyond the synchronous set, convergence and switching in
the opinion subsystem, and the effects of time-scale separation and more general higher-order interactions.



\bibliographystyle{IEEEtran}
\bibliography{refs}

\end{document}